\newcommand{\Z}{{\mathbb{Z}}}
\newcommand{\R}{{\mathbb{R}}}
\DeclareMathOperator{\Tr}{Tr}
\newcommand{\bd}[1]{\boldsymbol{ #1 }}
\newcommand{\bra}[1]{\mbox{$\langle #1 |$}}
\newcommand{\ket}[1]{\mbox{$| #1 \rangle$}}
\renewcommand{\H}{\mathcal{H}}
\newtheorem{thm}{Theorem}
\newtheorem{lem}[thm]{Lemma}
 \definecolor{darkgreen}{rgb}{0,0.5,0} 
\begin{document}

\title{Role of the pair potential for the saturation of generalized Pauli constraints}

\author{\"Ors Legeza}
\affiliation{Wigner Research Centre for Physics, Hungarian Academy of Sciences, Konkoly-Thege Mikl\'os \'ut 29-33, 1121 Budapest, Hungary}

\author{Christian Schilling}
\email{christian.schilling@physics.ox.ac.uk}
\affiliation{Clarendon Laboratory, University of Oxford, Parks Road, Oxford OX1 3PU, United Kingdom}

\date{\today}

\begin{abstract}
The dependence of the (quasi-)saturation of the generalized Pauli constraints on the pair potential is studied for ground states of few-fermion systems. For this, we consider spinless fermions in one dimension which are harmonically confined and interact by pair potentials of the form $|x_i-x_j|^s$ with $-1 \leq s\leq 5$. We use the Density Matrix Renormalization Group approach and large orbital basis to achieve the convergence on more than ten digits of both the variational energy and the natural occupation numbers. Our results confirm that the conflict between energy minimization and fermionic exchange symmetry results in a universal and non-trivial quasi-saturation of the generalized Pauli constraints (\emph{quasipinning}), implying tremendous structural simplifications of the fermionic ground state for all $s$.
Those numerically exact results are complemented by an analytical study based on a self-consistent perturbation theory which we develop for this purpose. The respective results for the weak coupling regime eventually elucidate the singular behaviour found for the specific values $s=2,4,\ldots$, resulting in an extremely strong quasipinning.
\end{abstract}

\pacs{03.65.-w, 05.30.Fk, 31.15.ac}


\maketitle

\section{Introduction}\label{sec:intro}
In recent years, quantum information theoretical concepts began to play a more important role in the description and understanding of
quantum many-body systems. One of the prime examples reflecting the progress at that exciting interface and its fruitful future prospects is given by Klyachko's breakthrough result a few years ago: It has been shown that Pauli's original exclusion principle \cite{Pauli1925} --- despite its long-term success on all physical length scales --- is incomplete. Indeed, as already suggested by first studies \cite{Borl1972,Rus2}, the fermionic exchange symmetry has been found \cite{Kly2,Kly3,Altun} to imply more restrictive constraints on the one-particle picture, independent of the underlying Hamiltonian, rendering Pauli's original principle obsolete. To be more precise, these so-called \emph{generalized Pauli constraints} (GPC) take the form of linear conditions,
\begin{equation}\label{GPC}
  D_j(\vec{\lambda}) \equiv \kappa_j^{(0)} + \sum_{k=1}^d \kappa_j^{(k)} \lambda_k\geq 0\,,
\end{equation}
on the decreasingly-ordered \emph{natural occupation numbers}  $\vec{\lambda}\equiv (\lambda_k)_{k=1}^d$, the eigenvalues of the one-particle reduced density operator $\rho_1\equiv N \mbox{Tr}_{N-1}[\ket{\Psi_N}\!\bra{\Psi_N}]$. Here, $\ket{\Psi_N}\in \wedge^N[\mathcal{H}_1^{(d)}]$ is the $N$-fermion quantum state, where the one-particle Hilbert space $\mathcal{H}_1^{(d)}$ has dimension $d$ and $j=1,\ldots,\nu^{(N,d)}$, $\kappa_j^{(k)}\in \Z$. For each setting $(N,d)$, the finite set of GPCs defines a convex \emph{polytope} $\mathcal{P}\subset \R^d$, a subset of the \emph{Pauli simplex} $1\geq \lambda_1\geq \ldots \geq \lambda_d\geq 0$. Moreover, the values $D_j(\vec{\lambda})$ coincide up to a prefactor with the $l_1$-distances of $\vec{\lambda}$ to the respective polytope facets $F_j$ corresponding to $D_j\equiv0$ \cite{CS2016a}.

In complete analogy to Pauli's original exclusion principle, the potential physical relevance of the GPCs would primarily be based on their saturation in concrete systems. Such pinning would then reduce the complexity of the system's quantum state and would define \emph{``a new physical entity with its own dynamics and kinematics''} \cite{Kly1} (see also \cite{CSQuasipinning,Stability}). Moreover, the geometrical structure underlying the GPCs suggests a natural hierarchy of extensions of the Hartree-Fock ansatz allowing one to systematically capture static correlations in few-body quantum systems \cite{Stability}.

In an analytical study \cite{CS2013} of three harmonically interacting spinless fermions in a one-dimensional harmonic trap it has been shown that the GPCs are saturated up to a very small correction of the order eight in the dimensionless coupling, $D\propto \kappa^8$. A succeeding extended study of such harmonic models \cite{Ebler,CSthesis,CS2016a,CS2016b,CSQ,FTthesis} has confirmed, by varying the particle number, spatial dimension, degree of spin-polarization and coupling strength that this striking \emph{quasipinning}-effect has a physical origin. It namely emerges from the conflict between energy minimization and fermionic exchange symmetry \cite{CS2016b}. The intensive ongoing debate discussing the physical relevance of the GPCs in more realistic systems, as, e.g., atoms and molecules (see \cite{Kly1,BenavLiQuasi,Kly5,Mazz14,CSthesis,MazzOpen,CSQuasipinning,BenavQuasi2,RDMFT,Mazzagain,Alex,CS2015Hubbard,RDMFT2,CBthesis,NewMazziotti,Mazz16,CBcorr,MazzOpen2,CSQChem,TomaszDoub}
and references therein) has been hampered, however, due to a couple of reasons: First, most of the numerical studies of realistic systems were based so far on very small active spaces of 3-5 orbitals and may therefore fail to accurately capture the true physical situation. Second, it has only been realized very recently that (quasi)pinning is in some cases trivial in the sense that it is a mere consequence of spatial symmetries \cite{BenavQuasi2,CS2015Hubbard,CBthesis} or may just follow from an (approximate) saturation of the Pauli constraints \cite{Mazz14,CSQ}.

In the present work, we revisit the quasipinning phenomenon by addressing all the concerns mentioned above. We namely use state-of-the-art numerical methods to calculate the natural occupation numbers to a high precision and use a concise quantitative measure to distinguish trivial from non-trivial (quasi)pinning. The main aim is thus to estimate the \emph{genuine} relevance that GPCs have in few-fermion quantum systems. By considering one-dimensional harmonic trap systems of fully-polarized fermions, we ensure that possible (quasi)pinning is neither an artifact of orbital-symmetries nor of spin-symmetries. On the other hand, freezing degenerate angular
and spin degrees of freedom and considering a steep external trap reduces the size of the underlying active space and thus allows us to eventually perform a fully conclusive analysis of the GPCs' relevance in those systems. This is quite different in atoms, where the Coulomb interaction between the electrons manifests itself in a wave function cusp which entails the presence of dynamic correlations. While those dynamic correlations typically do not change the qualitative physical behavior, their recovering with high precision requires, however, much larger active spaces for which the generalized Pauli constraints are not known yet. We therefore believe that our succeeding analysis will give us a rather good idea of the genuine relevance that GPCs have in few-fermion quantum systems.

The Hamiltonians at hand take the form
\begin{equation}\label{ham}
\hat{H} = \sum_{j=1}^N \left(\frac{\hat{p}_j^2}{2m}+\frac{1}{2}m\omega^2 \hat{x}_j^2\right)\,+\,K\,\sum_{j<k}\,|\hat{x}_j-\hat{x}_k|^s\,.
\end{equation}
By using the natural length scale of the external harmonic trap, $l\equiv \sqrt{\hbar/m\omega^2}$, we introduce the dimensionless coupling parameter
\begin{equation}\label{kappa}
\kappa \equiv K/ m \omega^2 l^{2-s}\,.
\end{equation}
This is equivalent to just setting  $m\equiv \omega \equiv \hbar \equiv 1$. Besides the general motivation above, there are further important reasons for choosing the family of Hamiltonians of the form \eqref{ham} rather than the electronic Hamiltonians studied in quantum chemistry: The remarkable recent progress in quantum optics allows one to simulate with ultracold gases an increasing variety of physical systems and models which high flexibility and control. For instance, in contrast to the electrons in atoms and molecules, the interaction between the ultracold fermionic atoms can be tuned at the Feshbach resonance \cite{BlochReview,Feshbach,ultracoldBook1}. Furthermore, by departing from the dilute gas regime, the effective pair interaction between the fermionic atoms is typically described by a Lennard-Jones-type potential, and thus differs not that much anymore from \eqref{ham}. The future prospects of  quantum simulation in general and the proposed experimental realization of the (quasi)pinning-effect in systems of ultracold fermionic atoms in particular (see, e.g., the expected ``transparency effect'' \cite{CS2015Hubbard}) provide further compelling reasons for studying the GPCs in harmonic trap systems rather than in weakly-correlated few-electron atoms.

In the following section, we develop  a self-consistent perturbation theoretical approach to determine the leading order behavior of the NONs in the regime of weak coupling. In Section \ref{sec:DMRG} we then explain how to obtain the numerically exact ground state of \eqref{ham} for finite coupling by using the Density Matrix Renormalization Group (DMRG) approach \cite{White-1992b}. Eventually, we use those results to discuss in Section \ref{sec:sgrid} the relevance of the GPCs for different Hamiltonians \eqref{ham} by considering about 100 different $s$-values.

\section{Analytical approach for weak coupling}\label{sec:pt}
In this section we elaborate on a perturbation theoretical approach to determine the leading order behaviour of the minimal distance $D(\vec{\lambda}(\kappa))$ of $\vec{\lambda}$ to the polytope boundary for the ground state $\ket{\Psi(\kappa)}$ of a \emph{general}
$N$-particle Hamiltonian,
\begin{equation}\label{PTham}
\hat{H}(\kappa) = \hat{H}^{(0)}+\kappa \hat{V}\,,
\end{equation}
in the regime of weak coupling. Here, $\hat{H}^{(0)}$ denotes the one-particle Hamiltonian, $\hat{V}$ the pair interaction and the Hamiltonian acts on the $N$-fermion Hilbert space $\wedge^N[\mathcal{H}_1^{(d)}]$, where we assume the one-particle Hilbert space to be finite, $d$-dimensional.
We assume that the ground state is non-degenerate and that the respective $D(\vec{\lambda}(\kappa))$ is analytical in $\kappa$, at least in a neighbourhood of $\kappa=0$. Since $D(\vec{\lambda}(0))=0$, following from the fact that $\vec{\lambda}(0)=(1,\ldots,1,0,\ldots,0)$ is a vertex of the polytope (\emph{``Hartree-Fock point''}), and $D(\vec{\lambda}(\kappa))\geq 0$ for all $\kappa$, the linear order of $D(\vec{\lambda}(\kappa))$ must vanish and therefore the leading order correction is quadratic. Determining this second order term by exploiting conventual perturbation theory is a rather lengthy exercise. In a first step, one would need to determine $\ket{\Psi(\kappa)}$ up to order $\kappa^2$. Then, one would need to determine the one-particle reduced density operator,
\begin{eqnarray}\label{PT1RDM}
\rho_1(\kappa) &\equiv& N\Tr_{N-1}[\ket{\Psi(\kappa)}\!\bra{\Psi(\kappa)}] \nonumber \\
&\equiv& \sum_{j=1}^d \lambda_j(\kappa) \ket{\varphi_j(\kappa)}\!\bra{\varphi_j(\kappa)}\,,
\end{eqnarray}
of $\ket{\Psi(\kappa)}$ by tracing out $N-1$ fermions. Recall that the natural occupation numbers $\lambda_j(\kappa)$ shall be ordered non-increasingly and the respective \emph{natural orbitals} $ \ket{\varphi_j(\kappa)}$ are uniquely defined as long as the natural occupation numbers are non-degenerate.
Finally, one would need to perform again second order perturbation theory, this time on $\rho_1(\kappa)$, to determine the natural occupation numbers up to corrections of $\mathcal{O}(\kappa^2)$. This last step is particularly challenging since it involves degenerate unperturbed eigenvalues. Indeed, one has $\mbox{spec}\big(\rho_1(0)\big)=(1,\ldots,1,0,\ldots,0)$. In the following we therefore elaborate on a self-consistent perturbation theory, which may simplify the task quite a lot.

\subsection{Self-consistent perturbation theory}\label{sec:ptselfcon}
The crucial point of our perturbation theoretical approach is that it exploits a self-consistent expansion of $\ket{\Psi(\kappa)}$. We namely expand $\ket{\Psi(\kappa)}$ for each coupling $\kappa$ as a linear combination of the Slater determinants built from its own natural orbitals \cite{Lowdin},
\begin{equation}\label{PTselfcon}
\ket{\Psi(\kappa)} = \sum_{\bd{i}} c_{\bd{i}}(\kappa)\, \ket{\bd{i}(\kappa)}\,.
\end{equation}
Here, we use the shorthand notation
\begin{equation}\label{PTSD}
\ket{\bd{i}(\kappa)}\equiv \ket{\varphi_{i_1}(\kappa),\ldots,\varphi_{i_N}(\kappa)}
\end{equation}
for the respective $N$-particle Slater determinant constructed from the $N$ natural orbitals $\ket{\varphi_{i_1}(\kappa)},\ldots, \ket{\varphi_{i_N}(\kappa)}$ and $\bd{i}\equiv (i_1,\ldots,i_N)$. Moreover, $\ket{\Psi(\kappa)}$ is normalized to unity, $\langle \Psi(\kappa)\ket{\Psi(\kappa)}=1$. The same shall hold for the natural orbitals and therefore also for all $\ket{\bd{i}(\kappa)}$. The expansion \eqref{PTselfcon} is self-consistent in the sense that the respective expansion coefficients $c_{\bd{i}}(\kappa)$ fulfill self-consistency conditions, ensuring that the respective one-particle reduced density operator \eqref{PT1RDM} is diagonal with respect to its own natural orbitals and that the natural occupation numbers are ordered non-increasingly.

As discussed in Appendix \ref{app:pt}, the self-consistent expansion \eqref{PTselfcon} implies several convenient structural simplifications. First of all, a compact expression follows for the distance $D(\vec{\lambda}(\kappa))$ for all $\kappa$. To explain this, we employ second quantization using the natural orbitals of $\ket{\Psi(\kappa)}$ as reference basis. We can then express the natural occupation numbers
as particle number expectation values, $\lambda_j(\kappa)= \bra{\Psi(\kappa)} f^\dagger(\varphi_j(\kappa))f(\varphi_j(\kappa))\ket{\Psi(\kappa)}$, where $f^\dagger(\chi)$ and $f(\chi)$ create and annihilate a fermion in the state $\ket{\chi}$. Hence, by introducing for any GPC \eqref{GPC} the respective operator,
\begin{equation}
\hat{D}_{\Psi(\kappa)} \equiv D\big((\hat{n}_j(\kappa))_{j=1}^d\big)\,,
\end{equation}
$\hat{n}_j(\kappa)\equiv f^\dagger(\varphi_j(\kappa))f(\varphi_j(\kappa))$, we obtain (see also Ref.~\cite{CSQuasipinning})
\begin{equation}
D(\vec{\lambda}(\kappa)) = \bra{\Psi(\kappa)} \hat{D}_{\Psi(\kappa)}\ket{\Psi(\kappa)}\,.
\end{equation}
Moreover, one observes the elementary identity $\bra{\bd{j}(\kappa)} \hat{D}_{\Psi(\kappa)} \ket{\bd{i}(\kappa)}=\delta_{\bd{j},\bd{i}}D(\vec{e}_{\bd{i}})$, where $\vec{e}_{\bd{i}}$ denotes a vector with entries $\left(\vec{e}_{\bd{i}}\right)_k=0,1$ depending on whether $k$ is contained in $\bd{i}$ (1) or not (0).
This identity then immediately leads to the compact expression
\begin{equation}\label{Dselfcon}
D(\vec{\lambda}(\kappa)) = \sum_{\bd{i}} |c_{\bd{i}}(\kappa)|^2\, D(\vec{e}_{\bd{i}})\,.
\end{equation}
It is remarkable that the right-hand side depends on $\kappa$ only via the coefficient functions $c_{\bd{i}}(\kappa)$. Moreover, \eqref{Dselfcon} holds not only for GPCs but for any function $D$ linear in the natural occupation numbers $\lambda_1,\ldots,\lambda_d$.
In the following, we consider only such GPCs which contain the Hartree-Fock point $\vec{\lambda}_{HF}\equiv (1,\ldots,1,0,\ldots,0)$, i.e.~constraints that are saturated for zero interaction, $\kappa \rightarrow 0$. By resorting to a perturbation theoretical expansion of the $c_{\bd{i}}(\kappa)$ and the natural orbitals $\ket{j(\kappa)}\equiv \ket{\varphi_j(\kappa)}$ one eventually finds (see derivation in Appendix \ref{app:pt})
\begin{eqnarray}\label{ptD}
D(\vec{\lambda}(\kappa))&=& \kappa^2\,\sum_{\bd{i}\in \mathcal{I}_2}\, \big|\bra{\bd{i}(0^+)}\big(\hat{H}^{(0)}-E^{(0)}\big)^{-1}\hat{V}\ket{\bd{i}_0}\big|^2 D(\vec{e}_{\bd{i}})\nonumber \\
&&+ \mathcal{O}(\kappa^3)
\end{eqnarray}
Here, the sum restricts to $\mathcal{I}_2$, the set of configurations $\bd{i}$ differing by exactly two orbital indices from $\bd{i}_0\equiv (1,2,\ldots,N)$ and $E^{(0)}$ denotes the energy of the unperturbed ground state $\ket{\Psi(0)}=\ket{\bd{i}_0}$. Since $\rho_1(0)$ has a degenerate spectrum, the expression \eqref{ptD} involves the \emph{adapted} unperturbed natural orbitals $\ket{j(0^+)}\equiv \ket{\varphi_j(0^+)}$ which do in general not coincide with the eigenstates $\ket{\varphi_j}$ of the one-particle Hamiltonian $\hat{H}^{(0)}$, $\ket{\varphi_j(0^+)}\neq \ket{\varphi_j}$. The adapted natural orbitals, formally defined via the limit process $\kappa \rightarrow 0^+$, can be determined without much computational effort \cite{Davidson76}. Hence, Eq.~\eqref{ptD} defines a striking connection between the pair interaction of the physical system and the quantum information theoretical quantity $D(\vec{\lambda})$, quantifying the absolute influence of the fermionic exchange symmetry on the one-particle picture.

An additional comment is in order, emphasizing the significance of result \eqref{ptD}.
In general, after determining the adapted natural orbitals, one could implement a unitary basis set transformation from the one-particle eigenstates $\ket{\varphi_j}$ of $\hat{H}^{(0)}$ to those adapted states $\ket{\varphi_j(0^+)}$. This would change $\hat{V}$ to another pair interaction $\hat{V}'$ and also $\hat{H}^{(0)}$ to another one-particle Hamiltonian with same energy spectrum . The respective expression \eqref{ptD} would then help to understand the mechanism behind quasipinning: The form of $\hat{V}'$ is related in the simplest possible way to the leading order of the distance $D(\vec{\lambda}(\kappa))$ of $\vec{\lambda}(\kappa)$ to the polytope boundary.

%

\section{Numerically exact treatment through DMRG}\label{sec:DMRG}
To apply DMRG in the context of continuously confined fermions we use its quantum chemical version (QC-DMRG) \cite{White-1999} adapted to spinless fermions and express Hamiltonian \eqref{ham} in second quantization. As truncated reference basis we choose the first $d$ oscillator states $\ket{\varphi_j}$ of the external harmonic trap (now playing the role of the ``lattice sites'' in standard DMRG).  The Hamiltonian then takes the form
\begin{equation}
\hat{H} = \sum_{i,j=0}^{d-1} h_{i;j} c_i^\dagger c_j + \frac{1}{2} K \!\!\!\!\sum_{i_1,i_2,j_1,j_2=0}^{d-1} \!\!\!V_{i_1 i_2;j_1,j_2} c_{i_1}^\dagger c_{i_2}^\dagger c_{j_2} c_{j_1}\,,
\end{equation}
where (recall $m\equiv \omega \equiv \hbar\equiv 1$)
\begin{equation}
h_{i;j} \equiv \bra{\varphi_i}\,\,\hat{p}_1^2/2+ \hat{x}_1^2/2 \,\,\ket{\varphi_j}  = \Big(j+\frac{1}{2}\Big)\,\delta_{i j}
\end{equation}
and
\begin{eqnarray}\label{eq:2matrix}
V_{i_1 i_2;j_1,j_2} &\equiv& \bra{\varphi_{i_1}}\!\otimes\!\bra{\varphi_{i_2}} \,|\hat{x}_1-\hat{x}_2|^s\,\ket{\varphi_{j_1}}\!\otimes\!\ket{\varphi_{j_2}}
\end{eqnarray}
In a tedious derivation --- being part of the long-term establishment of a DMRG scheme for systems of continuously confined fermions \cite{CSOLDMRG} --- one can determine an analytical expression for the two-particle matrix elements
\begin{widetext}
\begin{eqnarray}\label{eq:2matrix6}
V_{i_1 i_2;j_1,j_2}&=& 2^{s/2} \frac{(-1)^{i_2+j_2}}{\sqrt{i_1! i_2! j_1! j_2!}}  \sum_{m_1=0}^{\min{(i_1,j_1)}}\,\,\sum_{m_2=0}^{\min{(i_2,j_2)}} \binom{i_1}{m_1}\binom{j_1}{m_1}\binom{i_2}{m_2}\binom{j_2}{m_2} \, m_1!\, m_2!\, \mathcal{J}_{i_1+i_2+j_1+j_2-2m_1-2m_2}\,,
\end{eqnarray}
\end{widetext}
assuming $i_1+i_2+j_1+j_2$ to be even since otherwise $V_{i_1 i_2;j_1,j_2}$ vanishes (recall the invariance of the pair interaction under spatial inversion). Here, $\mathcal{J}_{k} \equiv \Gamma(\frac{s+1}{2}) \frac{2^{-k/2}}{\sqrt{\pi}}\prod_{j=0}^{k/2-1} (s-2j)$ and $\Gamma$ denotes the Gamma function.

By choosing sufficiently large bases of up to $d=80$ orbitals, we ensure the convergence of both, the variational energy and the natural occupation numbers on at least ten digits. In particular, we use the dynamic block state selection (DBSS) procedure \cite{Legeza-2003a,Legeza-2004b} to reach a threshold accuracy of $10^{-13}$ in the energy. We also invoked the dynamically extended active space (DEAS) procedure \cite{Legeza-2003b} with a minimum number of block states set to $M=1024$ to guarantee fast and stable convergence during the initialization sweep of the DMRG. The residual error threshold for the respective L\'anczos and Davidson diagonalization procedure is set to $10^{-13}$.

In the left panel of Fig.~\ref{fig:gs} we illustrate the convergence of our approach by comparing our variational energy to the exact one for the analytically solvable harmonic case \cite{harmOsc2012}, i.e., $s=2$,  for the fixed coupling $\kappa=1$ and $N=2,3,4,5$ fermions. Convergence of the energy on more than ten digits is achieved in our approach by choosing $d$ sufficiently large.
\begin{figure}[htb]
\includegraphics[width=0.38\columnwidth,height=4.5cm]{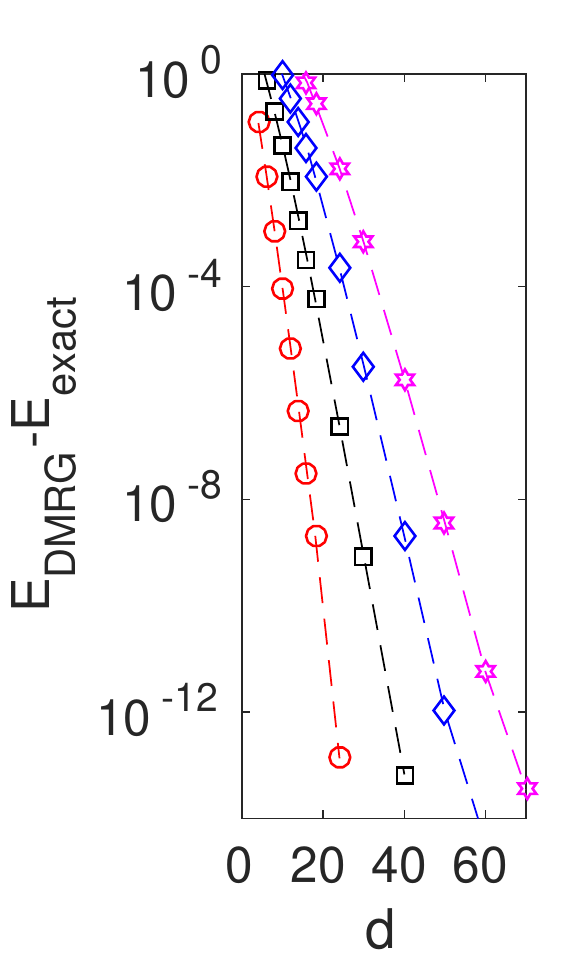}
\includegraphics[width=0.60\columnwidth,height=4.5cm]{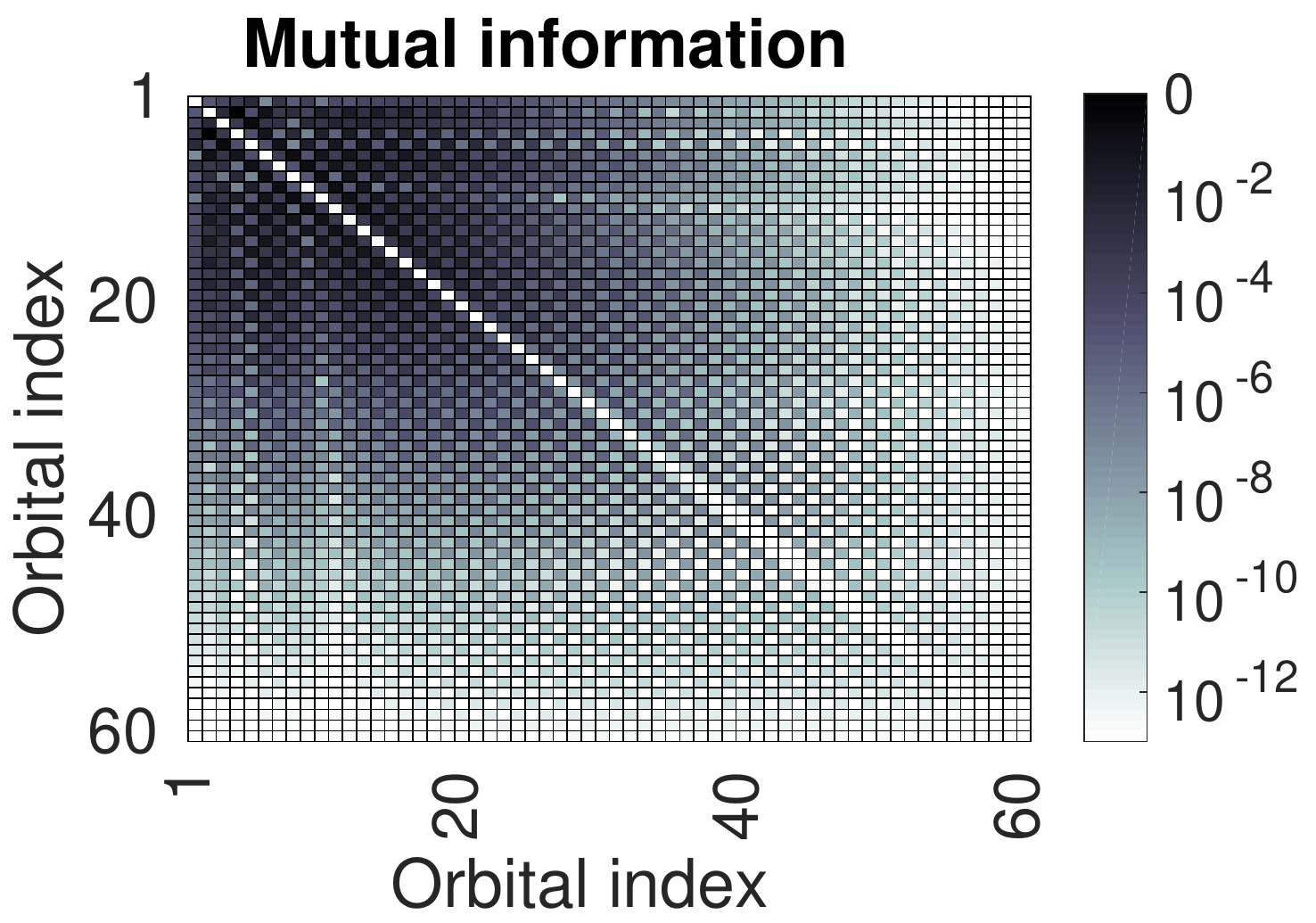}
\caption{Left: Absolute error of the DMRG ground state energy for $s=2$ (harmonic interaction), $\kappa= 1$ as a function of the basis set size $d$. Cases $N=2,3,4,5$ are represented by symbols
{\color{red}{$\pmb{\ocircle}$}},
{\color{black}{$\pmb{\Box}$}},
{\color{blue}{$\pmb{\lozenge}$}},
{\color{magenta}{$\pmb{\davidsstar}$}}, respectively and dashed lines emphasize the exponentially fast convergence in $d$. Right: Two-orbital mutual
information $I_{i,j}$ for any two orbitals $i,j=1,2,\ldots,60$ for same physical system and $N=5$.}
\label{fig:gs}
\end{figure}
To illustrate the need for large basis sets from a different perspective, we present in the right panel of Fig.~\ref{fig:gs} for the same physical system the mutual information $I_{i,j}$ for any two orbitals $i,j=1,2,\ldots,60$. Recall that $I_{i,j}$ quantifies the correlation between orbitals $i,j$ since it quantifies the extra information contained in the orbital reduced density operator $\rho_{ij}$ beyond the information already contained in both single orbital reduced density operators $\rho_i,\rho_j$, i.e.~in $\rho_i\otimes\rho_j$ \cite{Legeza-2003b,Legeza-2006a,Rissler-2006,Barcza-2011,Szalay-2015}. While the orbitals around the Fermi level are apparently the most active ones, it can also be inferred from Fig.~\ref{fig:gs} that large basis sets, $d\gg N$, are required to cover also dynamical correlation up to high precision.

For a detailed presentation of the DMRG scheme which we developed to describe systems of continuously confined fermions and a comprehensive analysis of the entanglement structure of those systems we refer the reader to \cite{CSOLDMRG}.

\section{Results}\label{sec:sgrid}
For various numerically exact ground states calculated by DMRG we determine the corresponding one-particle reduced density matrices and diagonalize them numerically to obtain the natural occupation numbers $\lambda_k$. Since our high-precision approach involves large active spaces and since the GPCs are known so far only up to basis sets of size $d=12$ we resort to the concept of truncation \cite{CS2016a}. We perform the (quasi)pinning analysis in terms of a \emph{truncated} vector $\vec\lambda'$, obtained by discarding all NONs sufficiently close to 0 (and also 1).
To be more specific, we quantify quasipinning by the minimal $l_1$-distance $D$ of $\vec{\lambda}$ to the polytope boundary. We then reduce $N$ to $N'$ by ignoring eigenvalues close to $1$, and $d$ to $d'$ by also ignoring those close to $0$. The minimal distance $D'$ of $\vec\lambda'$ to the boundary of the polytope $\mathcal{P}'$ of $(N',d')$ coincides with $D$ in the full setting up to a truncation error $\varepsilon'$,
\begin{equation}\label{eq:trunc}
\big|D-D'\big| \leq \varepsilon'\equiv\sum_{j=1}^{N-N'}(1-\lambda_j)+\!\!\sum_{k=0}^{d-d'-N'+N-1}\!\!\lambda_{d-k}\,.
\end{equation}
Furthermore, since the polytope $\mathcal{P}$ is a subset of the Pauli simplex $1\geq \lambda_1\geq \ldots \geq \lambda_d\geq 0$,
quasipinning can be considered as \emph{non-trivial} only if the distance of $\vec{\lambda}$ to the polytope boundary $\partial P$ is much smaller than the distance of $\vec{\lambda}$ to the boundary of the Pauli simplex. This ``degree of non-triviality'' is quantified by the \emph{$Q$-parameter}  \cite{CSQ}.

\begin{figure}[h]
   \includegraphics[width=7.55cm]{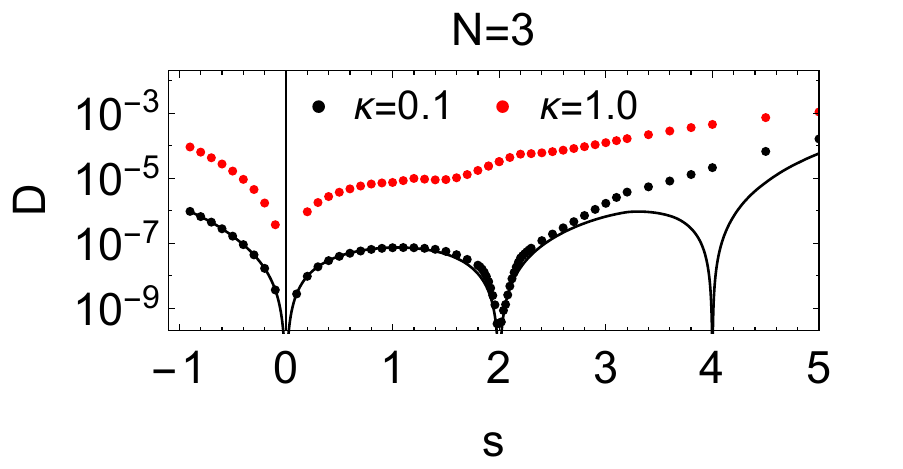}
   \includegraphics[width=7.55cm]{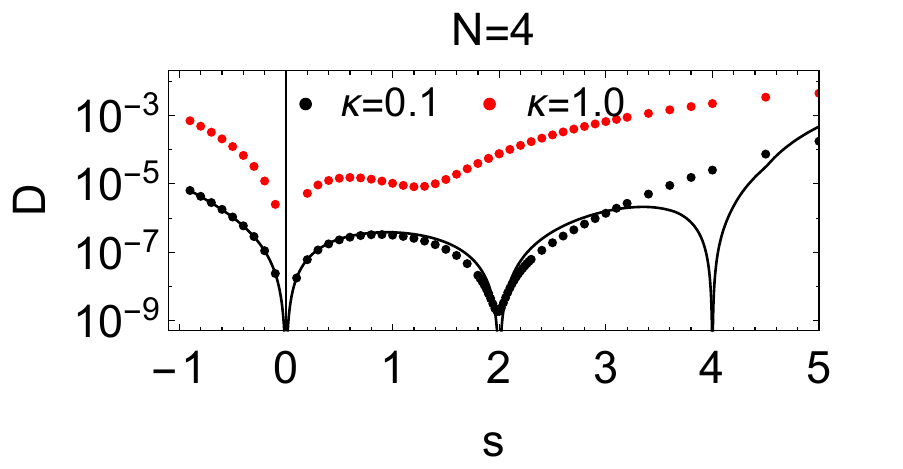}
   \includegraphics[width=7.55cm]{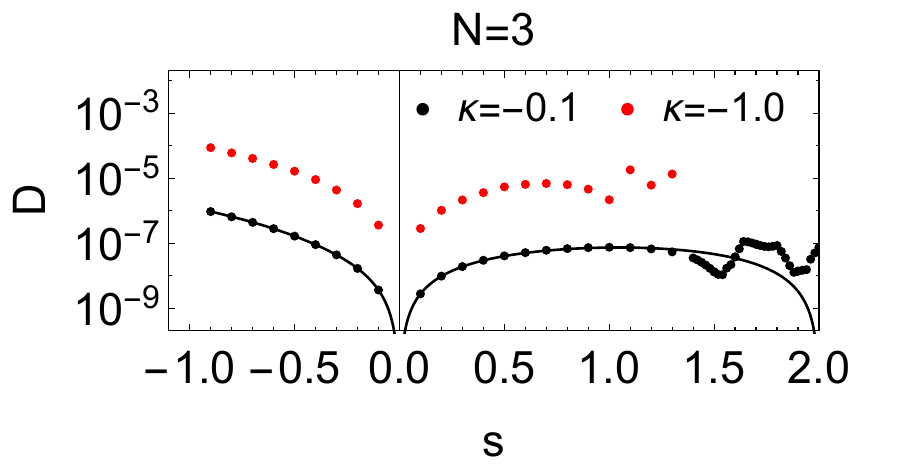}
   \includegraphics[width=7.55cm]{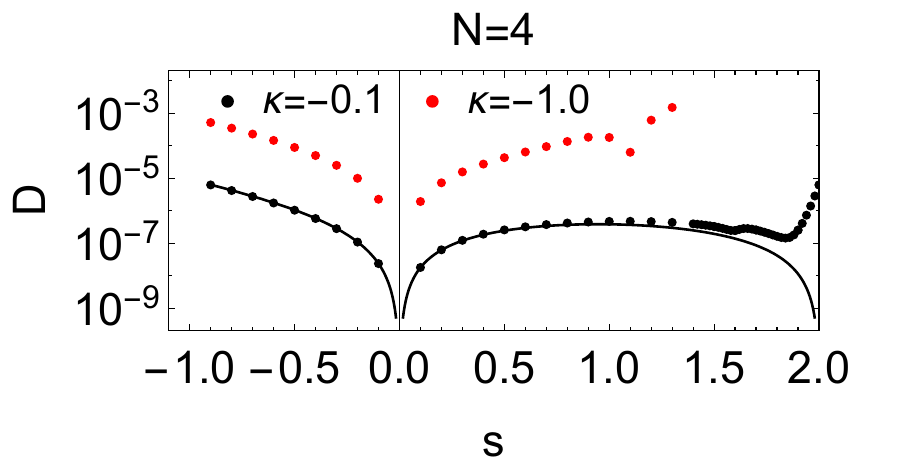}
    \caption{For the ground state of the Hamiltonian (\ref{ham}) we present for different coupling strengths $\kappa$ the $s$-dependence of the minimal distance $D$ of the vector $\vec{\lambda}$ of NON to the polytope boundary. The quadratic leading order of $D$ following from the perturbation theory is shown as solid line.}
    \label{fig:D}
\end{figure}
In Fig.~\ref{fig:D} we present the results for $D$ for a large grid of about 100 $s$-values for the case of $N=3,4$ fermions for weak coupling ($\kappa=\pm 0.1$) and medium coupling ($\kappa=\pm 1.0$). For the case of negative couplings we restrict to $s<2$ since the system has no bound ground state for $s>2$. For all grid points we apply the \emph{``concept of truncation''} \cite{CS2016a} as explained above. The respective truncation errors for the particle numbers $N=3,4$ and the chosen couplings $\kappa$ turn out to be negligibly small, i.e.~they are smaller than the width of the plot markers. Even for medium interaction strengths, those numerical results confirm that the GPCs have an \emph{absolute} relevance for all $s$-values since the distance $D$ of the NONs $\vec{\lambda}$ to the polytope boundary is much smaller than the length scale of the polytope given by $\mathcal{O}(1)$.
For the specific cases of a harmonic ($s=2$) and quartic ($s=4$) pair interaction the system exhibits an extremely strong quasipinning which is mainly due to the fact that the active space size shrinks for $s=2n, n\in \mathbb{N}$ \cite{CS2013,CSthesis,CS2016a}. In that context, the reader should note that $s=0$ represents a non-interacting system implying $D(\vec{\lambda}(\kappa))=0$ for all couplings.
A comparison of the results for small coupling ($|\kappa|=0.1$) with those for medium coupling ($|\kappa|=1.0$) shows that increasing the coupling
leads to a weakening of the quasipinning which is due to the increase of the total correlation. Nonetheless, given that $D(\vec{\lambda})\leq 10^{-3}$ for $|\kappa|=1.0$ for all $s$-values, the quasipinning for medium coupling is still quite strong.

Particularly remarkable is the fact that the unique nature of the extremely strong quasipinning in the neighbourhood of $s=2,4$ reduces a lot as one increases the coupling from very-small ($|\kappa|\ll 1$) to small ($|\kappa|=0.1$) and eventually medium coupling ($|\kappa|=1.0$). In addition, it is also remarkable that for $\kappa=\pm0.1$ the perturbation theoretical results for $D(\vec{\lambda})$ agree with the numerically exact DMRG results almost perfectly for $-1<s<2.5$ and $-1<s<1.3$ in case of $\kappa>0$ and $\kappa<0$, respectively. For $s>3$ the numerically exact results do not agree with the second-order perturbation theoretical results. This is due to the fact that for such extreme pair interactions, the coupling $\kappa=0.1$ is not yet small enough and thus  higher orders ($\kappa^3$) strongly affect the behaviour of $D(\vec{\lambda(\kappa)})$. Much better agreement between the numerical results and the perturbation theoretical results can be found for $s>3$, however, by considering smaller couplings as, e.g., $\kappa=0.01$ (not presented here).

\begin{figure}[h]
   \includegraphics[width=7.55cm]{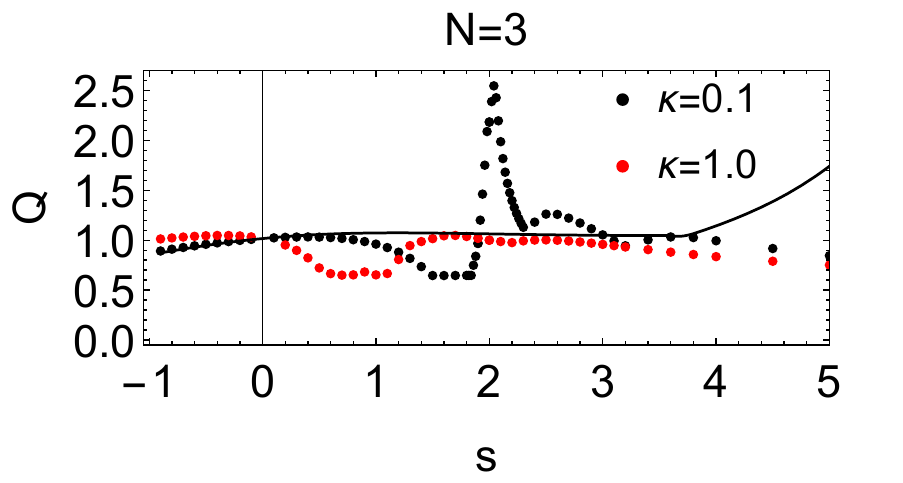}
   \includegraphics[width=7.55cm]{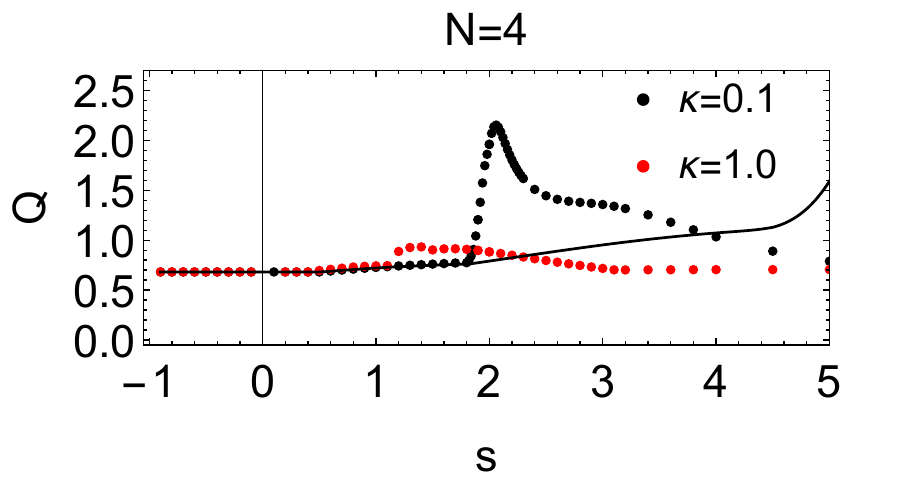}
   \includegraphics[width=7.55cm]{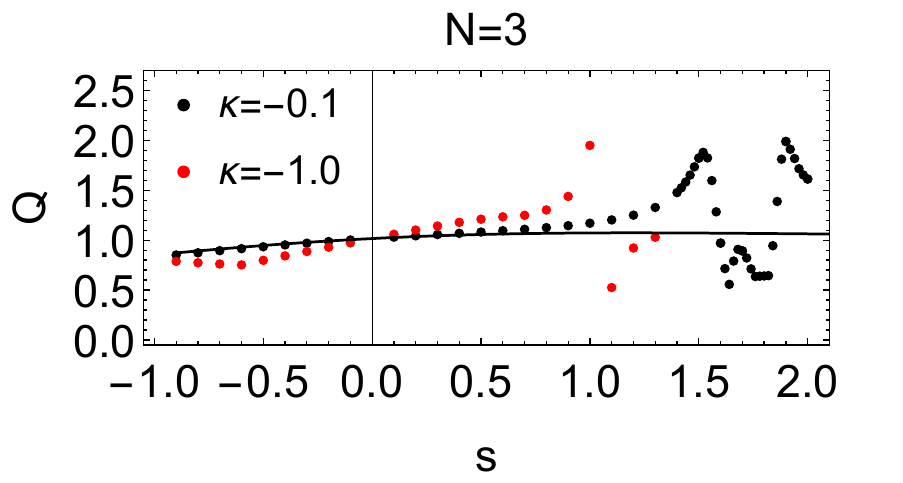}
   \includegraphics[width=7.55cm]{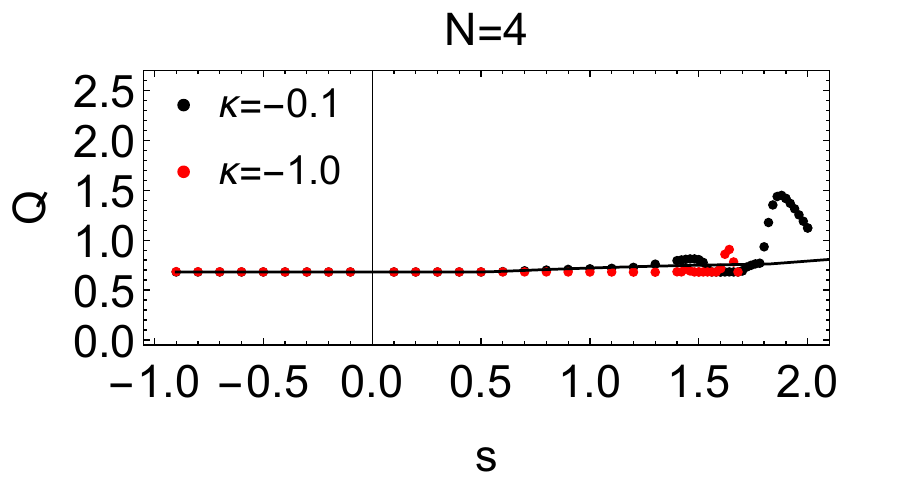}
    \caption{For the ground state of the Hamiltonian (\ref{ham}) we present for different coupling strengths $\kappa$ the $s$-dependence of the $Q$-parameter. The leading order of $Q$ following from the perturbation theory is shown as solid line.}
    \label{fig:Q}
\end{figure}

While Fig.~\ref{fig:D} confirms the relevance of the fermionic exchange symmetry on the one-particle picture, it is important to also understand to what extent this relevance needs to be assigned to Pauli's original principle. The potential significance of the GPCs beyond the already well-established relevance of Pauli's exclusion principle is quantified by the $Q$-parameter \cite{CSQ}. Recall that a value $Q_j(\vec{\lambda})$ of the $Q$-parameter for the $j$-th GPC, $D_j\geq 0$, means that this GPC is $10^{Q_j(\vec{\lambda})}$ stronger saturated than what one could expect from the approximate saturation of some Pauli constraints. The results for the overall $Q$-parameter $Q=\max_{j}(Q_j)$ are presented in Fig.~\ref{fig:Q}. First, we infer that the GPCs have a \emph{non-trivial} relevance for all $s$-values and all considered $\kappa$-values, i.e.~some GPCs are always saturated by a factor of about $10$ stronger than Pauli's exclusion principle constraints would suggest. The singular behaviour of $D(\vec{\lambda})$ at $s=2,4$ is also present in the $Q$-parameter, at least for $s=2$. For $s=4,6,\ldots$ the present DMRG results cannot resolve these singularities. Yet, we are convinced that the non-trivial character of the quasipinning is particularly pronounced at all positive even integer values of $s$, and in the respective vicinities.
Furthermore, the perturbation theory (shown as solid line) shows that the leading (zeroth) order $Q^{(0)}$ of $Q(\vec{\lambda}(\kappa))$ is approximately constant for almost all $s$ in complete contrast to the harmonic case. Indeed, for $s=2$ and probably also for further even-integer $s$-values above $s=2$ the $Q$-parameter diverges as the coupling $\kappa$ tends to zero (and thus does not allow for a perturbation theoretical expansion of $Q(\vec{\lambda}(\kappa))$) around $\kappa=0$, i.e., $Q(\vec{\lambda}(\kappa))$ is not analytical at $\kappa=0$.

\section{Summary and conclusion}\label{sec:concl}
Conclusive results for a harmonic model system \cite{CS2013,Ebler,CSthesis,CS2016a} provided first evidence that the generalized Pauli constraints (GPC) would have a tremendous physical relevance since they were found to be saturated up to a very small correction of the order eight in the coupling, $D\propto \kappa^8$. In the present work, we explore whether those seminal findings --- the presence of such extremely strong quasipinning in ground states up to medium interaction strength --- also hold for various other few-fermion systems.  Answering this question is, however, highly challenging since the \emph{exact} description of \emph{interacting} fermions in the continuum can be considered as one of the hardest problems in physics. Even worse, as recent studies of atoms and molecules based on rather small active spaces have revealed, quasipinning found in those systems can be artificial in the sense that it is a mere consequence of the orbital and spin symmetries \cite{BenavQuasi2,CS2015Hubbard,CSthesis,CBthesis}. In our work, we largely avoided those bottlenecks by considering one-dimensional systems of spinless fermions interacting by a general pair potential of the form $\kappa |\hat{x}_i-\hat{x}_j|^s$ confined by a harmonic external potential. We resorted to the DMRG approach in the context of continuously confined fermions to determine the numerically exact ground states of those systems up to medium coupling $\kappa$ for the exemplary cases of $N=3,4$ fermions and for about 100 different pair interactions $-1<s\leq s_{max}$, where $s_{max}=5$ for attractive coupling ($\kappa>0$) and $s_{max}=2$ for repulsive coupling ($\kappa<0$). By choosing sufficiently large basis sets of up to 80 orbitals we ensured the convergence on more than ten digits of both, the variational energy and the natural occupation numbers $\vec{\lambda}\equiv (\lambda_i)$.

Our numerically exact analysis confirms the original expectation: The GPCs are indeed universally relevant in the sense that they are  approximately saturated regardless of the type of pair interaction. Even for medium interaction strength and almost the whole regime $-1<s<5$ of considered pair interactions, we found a quasi-saturation $D(\vec{\lambda})\leq 10^{-3}$ of the GPCs. This provides further evidence that such quasipinning has its origin in the conflict between energy minimization and fermionic exchange symmetry which is present in all systems of continuously confined fermions. To distinguish between genuine and trivial quasipinning we used the $Q$-parameter \cite{CSQ}. The comprehensive analysis eventually confirms that the quasipinning by the GPCs is not primarily a result of the approximate saturation of Pauli's exclusion principle constraints $0\leq \lambda_j \leq 1$.

To shed more light on the weak coupling regime, $|\kappa| \ll 1$, we developed a self-consistent perturbation theory which is based on an expansion of the $N$-fermion quantum states in Slater determinants built from its own natural orbitals. The respective analytical results agree quite well for most of the $s$-regime for coupling $\kappa=\pm 0.1$ with the numerical results. Furthermore, the analytical results also elucidate the singular behaviour found for the specific values $s=2,4,\ldots$, resulting in an even stronger, rather extreme quasipinning compared to generic $s$-values for weak couplings. In particular, the perturbation theory provides a conceptually important insight into quasipinning with consequences for the related terminology: Quasipinning in generic systems is quite strong despite the fact that the leading order correction is only quadratic, $D(\vec{\lambda}(\kappa))= c_2 \kappa^2$. It is thus the respective prefactor ($c_2$) rather than the exponent of the leading order of $D(\vec{\lambda}(\kappa))$ which defines the strength of quasipinning. Only for the specific case $s=2$ and probably some further even-integer $s$-values, the quadratic and further higher orders in the expansion of $D(\vec{\lambda}(\kappa))$ vanish rigorously.

All these results, presented in our work for $N=3,4$ fermions, hold qualitatively also for larger particle numbers.

Due to the remarkable implication of (quasi)pinning for the structure of the many-fermion wave functions \cite{Stability}, our findings emphasize again the potential significance that GPCs may have in few-body quantum systems, particularly for Multi-Configurational Self-Consistent Field (MCSCF) ansatzes and in Reduced Density Matrix Functional Theory (RDMFT). Such applications of the GPCs would require, however, that the recent development \cite{Kly2,Kly3,Altun,SK11,SOK2012,ST13,MOS13,HKS13,SOK14,Alex,MS15,symplecticreview,MS17} in quantum information sciences and mathematical physics yields more efficient algorithms for the calculations of the GPCs for larger active spaces.


\begin{acknowledgements}
We acknowledge financial support from the Oxford Martin Programme on Bio-Inspired Quantum Technologies and the UK Engineering and Physical Sciences Research Council (Grant EP/P007155/1) (C.S.), the Hungarian National Research, Development and Innovation Office (NKFIH) through Grant No. K120569 and the Hungarian Quantum Technology National Excellence Program (Project No. 2017-1.2.1-NKP-2017-00001) (O.L.)
\end{acknowledgements}

\appendix

\section{Details of the self-consistent perturbation theory}\label{app:pt}
In this appendix we present all technical details of the self-consistent perturbation theory proposed in Sec.~\ref{sec:ptselfcon}.

Let us consider an $N$-particle Hamiltonian of the general form
\begin{equation}\label{appPTham}
\hat{H}(\kappa) = \hat{H}^{(0)}+\kappa \hat{V}\,,
\end{equation}
acting on the $N$-fermion Hilbert space $\wedge^N[\H_1^{(d)}]$, where $\H_1^{(d)}$ denotes the underlying one-particle Hilbert space of dimension $d$. Here, $\hat{H}^{(0)}$ is a general one-particle Hamiltonian including the kinetic energy and the external potential and $\hat{V}$ is a pair interaction.
We assume that the ground state $\ket{\Psi(\kappa)}$ of the Hamiltonian \eqref{appPTham} depends analytically on $\kappa$, at least in a neighbourhood of $\kappa=0$, and thus allows us to study it by perturbation theoretical means around $\kappa=0$.

We expand $\ket{\Psi(\kappa)}$ self-consistently according to (\ref{PTselfcon}). The respective natural orbitals $\ket{\varphi_j(\kappa)}\equiv \ket{j(\kappa)}$ follow from the one-particle reduced density operator \eqref{PT1RDM} obtained after tracing out $N-1$ fermions.
This self-consistent expansion has a couple of convenient properties. To discuss them we expand
the coefficient functions $c_{\bd{i}}(\kappa)$,
\begin{equation}\label{PTc}
c_{\bd{i}}(\kappa) = c_{\bd{i}}^{(0)}+ \kappa\, c_{\bd{i}}^{(1)} + \mathcal{O}(\kappa^2)
\end{equation}
and the natural orbitals $\ket{j(\kappa)}$,
\begin{equation}\label{PTNO}
\ket{j(\kappa)} = \ket{j^{(0)}} + \kappa\, \ket{j^{(1)}} + \mathcal{O}(\kappa^2)\,.
\end{equation}
Here, the natural orbitals (and thus the Slater determinants $\ket{\bd{j}(\kappa)}$) shall be normalized to unity for all $\kappa$, $\langle j(\kappa)\ket{j(\kappa)}=1$. It is important to notice that the states $\ket{j^{(0)}}\equiv \lim_{\kappa\rightarrow 0^+} \ket{j(\kappa)}$ do in general not coincide with the one-particle eigenstates of the unperturbed Hamiltonian $\hat{H}^{(0)}$. Nonetheless, one has
\begin{eqnarray}\label{ptspan}
\mbox{span}\left(\{\ket{j}\}_{1\leq j \leq N}\right)&=&\mbox{span}\left(\{\ket{j^{(0)}}\}_{1\leq j \leq N}\right) \nonumber  \\
\mbox{span}\left(\{\ket{j}\}_{N+1\leq j \leq d}\right)&=&\mbox{span}\left(\{\ket{j^{(0)}}\}_{N+1\leq j \leq d}\right)
\end{eqnarray}
Concerning the total quantum state \eqref{PTselfcon}, the expansion in $\kappa$ reads
\begin{equation}\label{ptPsi}
\ket{\Psi(\kappa)} = \ket{\Psi^{(0)}}+ \kappa\, \ket{\Psi^{(1)}}+\mathcal{O}(\kappa^2).
\end{equation}
with the normalization condition $\langle \Psi(\kappa)\ket{\Psi(\kappa)}=1$. Since the ground state is assumed to be unique, we have
\begin{equation}\label{pt0c}
c_{\bd{i}}^{(0)}=\,\begin{cases}
1\,,\quad \mbox{if}\,\bd{i}=\bd{i}_0\\
0\,,\quad\mbox{otherwise}
\end{cases}
\end{equation}
where $\bd{i}_0\equiv (1,2,\ldots,N)$.
A first useful property of the self-consistent expansion \eqref{PTselfcon} is that configurations $\bd{i}$ differing by exactly one orbital index from the reference-configuration $\bd{i}_0\equiv(1,2,\ldots,N)$ contribute to $\ket{\Psi(\kappa)}$ with significantly reduced weight as stated in the following Lemma.
\begin{lem}\label{lem1}
Let $\ket{\Psi(\kappa)}$ be an $N$-fermion quantum state, analytical in $\kappa$, and normalized to unity, $\langle \Psi(\kappa)\ket{\Psi(\kappa)}=1$. We denote the overall first contribution in the series of the analytical coefficient functions $c_{\bd{j}}(\kappa)$ in the self-consistent perturbation expansion \eqref{PTselfcon} by $r$. Then, $c_{\bd{i}_0}(\kappa)=1 + \mathcal{O}(\kappa^{2r})$ and all configurations $\bd{i}$ differing from the reference configuration $\bd{i}_0\equiv(1,2,\ldots,N)$ by exactly one orbital index, $|\bd{i}\cap \bd{i}_0| =N-1$, contribute only with weight of the order $\mathcal{O}(\kappa^{2r})$.
\end{lem}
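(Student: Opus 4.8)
\medskip
\noindent\textbf{Proof plan.} The plan is to derive both assertions from a single structural fact: since the Slater determinants $\ket{\bd{i}(\kappa)}$ in the expansion \eqref{PTselfcon} are built from the \emph{natural} orbitals of $\ket{\Psi(\kappa)}$, the associated one-particle reduced density operator $\rho_1(\kappa)$ is by construction diagonal in that orbital basis, which in second quantization with respect to the natural orbitals means $\bra{\Psi(\kappa)}f^\dagger(\varphi_q(\kappa))f(\varphi_p(\kappa))\ket{\Psi(\kappa)}=0$ for all $p\neq q$. Together with $\langle\Psi(\kappa)|\Psi(\kappa)\rangle=1$, these are precisely the self-consistency conditions of Sec.~\ref{sec:ptselfcon}, and they are all that is needed. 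Uniqueness of the ground state and analyticity give $c_{\bd{i}}^{(0)}=\delta_{\bd{i},\bd{i}_0}$; I then take $r\geq 1$ to be the lowest order at which some coefficient function $c_{\bd{j}}(\kappa)$ with $\bd{j}\neq\bd{i}_0$ has a non-vanishing Taylor coefficient, so that $c_{\bd{i}}(\kappa)=\mathcal{O}(\kappa^{r})$ for every $\bd{i}\neq\bd{i}_0$ (if all these functions vanish identically the statement is trivial).

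For $c_{\bd{i}_0}(\kappa)=1+\mathcal{O}(\kappa^{2r})$, I would first fix the global phase of $\ket{\Psi(\kappa)}$ so that $c_{\bd{i}_0}(\kappa)$ is real and positive near $\kappa=0$. Normalization then gives $c_{\bd{i}_0}(\kappa)^2=1-\sum_{\bd{i}\neq\bd{i}_0}|c_{\bd{i}}(\kappa)|^2$, and since the sum is finite with each summand $\mathcal{O}(\kappa^{2r})$, taking the square root yields $c_{\bd{i}_0}(\kappa)=1+\mathcal{O}(\kappa^{2r})$.

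For the single excitations, fix $\bd{j}=(\bd{i}_0\setminus\{p\})\cup\{q\}$ with $p\leq N<q$ (the general form of a configuration with $|\bd{j}\cap\bd{i}_0|=N-1$) and evaluate the off-diagonal self-consistency condition for the pair $(p,q)$. Writing $\bd{i}'\equiv(\bd{i}\setminus\{p\})\cup\{q\}$, one has $f^\dagger(\varphi_q(\kappa))f(\varphi_p(\kappa))\ket{\bd{i}(\kappa)}=\pm\ket{\bd{i}'(\kappa)}$ when $p\in\bd{i}$ and $q\notin\bd{i}$, and $0$ otherwise, so the condition becomes the finite sum $\sum_{\bd{i}:\,p\in\bd{i},\,q\notin\bd{i}}(\pm)\,\overline{c_{\bd{i}'}(\kappa)}\,c_{\bd{i}}(\kappa)=0$. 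The term $\bd{i}=\bd{i}_0$ equals $\pm\,\overline{c_{\bd{j}}(\kappa)}\,c_{\bd{i}_0}(\kappa)$; and because $q>N$ excludes $q\in\bd{i}_0$, every other term has both $\bd{i}\neq\bd{i}_0$ and $\bd{i}'\neq\bd{i}_0$, hence is a product of two non-reference coefficient functions and is $\mathcal{O}(\kappa^{2r})$. Inserting $c_{\bd{i}_0}(\kappa)=1+\mathcal{O}(\kappa^{2r})$ collapses the condition to $\overline{c_{\bd{j}}(\kappa)}=\mathcal{O}(\kappa^{2r})$: the single-excitation amplitudes are suppressed by an extra factor $\kappa^{r}$ relative to the generic (double-excitation) amplitudes — a self-consistent-basis analogue of Brillouin's theorem, which is also what later permits the restriction of the sum in \eqref{ptD} to double excitations $\mathcal{I}_2$.

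The Slater--Condon sign bookkeeping and the normalization algebra are routine. The step requiring genuine care is the observation that the creation/annihilation operators in the self-consistency condition refer to the \emph{same} $\kappa$-dependent natural-orbital basis in which $\ket{\Psi(\kappa)}$ is expanded, so that the matrix elements $\bra{\bd{j}(\kappa)}f^\dagger(\varphi_q(\kappa))f(\varphi_p(\kappa))\ket{\bd{i}(\kappa)}$ are $\kappa$-independent combinatorial factors and all $\kappa$-dependence sits in the $c_{\bd{i}}(\kappa)$; one also has to note that for small $|\kappa|$ the ordering $\lambda_1\geq\cdots\geq\lambda_d$ keeps the first $N$ natural orbitals in the occupied block and the remainder in the virtual block, which is what legitimizes the split $p\leq N<q$, and to assume, as the Lemma implicitly does, that the natural occupation numbers stay non-degenerate for $\kappa\neq 0$ so that the natural orbitals and hence the expansion remain analytic.
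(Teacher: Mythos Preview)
Your argument is correct and follows essentially the same route as the paper: both exploit the vanishing of the off-diagonal one-particle reduced density matrix element between an occupied and a virtual natural orbital, isolate the term $c_{\bd{i}_0}^\ast c_{\bd{j}}$ in the resulting bilinear sum over configurations, and bound all remaining terms by $\mathcal{O}(\kappa^{2r})$ since each is a product of two non-reference coefficients. Your presentation is in fact somewhat more careful than the paper's (explicit phase fixing, the remark that the Slater--Condon matrix elements are $\kappa$-independent in the self-consistent basis, and the check that $q>N$ forces $\bd{i}'\neq\bd{i}_0$), but the underlying mechanism is identical.
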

\begin{proof}
Consider a configuration $\bd{i}$ differing from the reference configuration $\bd{i}_0$ by the orbital index $\alpha>N$ which replaces the index $l\leq N$ in $\bd{i}_0$, i.e.~$\bd{i}=(\bd{i}_0 \cup\{\alpha\})\setminus \{l\}$ Due to the self-consistent character of the expansion \eqref{PTselfcon}, the one-particle reduced density operator $\rho_1(\kappa)$ is diagonal in the basis of its own natural orbitals. In particular,
\begin{eqnarray}\label{1rdmentry}
0&\overset{!}{=} &\bra{\alpha(\kappa)}\rho_1(\kappa)\ket{l(\kappa)} \nonumber \\
&=& \sum_{\bd{j}\ni l} c_{\bd{j}}^\ast(\kappa)\, c_{(\bd{j}\cup\{\alpha\})\setminus\{l\}}(\kappa) \,(-1)^{\#\{i\in \bd{j}|k<i<\alpha\}}\nonumber \\
&=& c_{\bd{i}_0}^\ast(\kappa)\, c_{\bd{i}}(\kappa)\,(-1)^{N-l+1} + \mathcal{O}(\kappa^{2r}) \,.
\end{eqnarray}
In the last line, we have used for $\bd{i}\neq \bd{i}_0$ that $c_{\bd{i}}(\kappa)=\mathcal{O}(\kappa^r)$, i.e., $r$ is the overall leading order in $\kappa$ of the series \eqref{PTc}. Finally, by using $c_{\bd{i}_0}(\kappa)=1+\mathcal{O}(\kappa^{2r})$ following from the normalization $1=\langle \Psi(\kappa)\ket{\Psi(\kappa)}=|c_{\bd{i}_0}(\kappa)|^2 + \mathcal{O}(\kappa^{2r})$ we then obtain from \eqref{1rdmentry} $c_{\bd{i}}(\kappa)=\mathcal{O}(\kappa^{2r})$.
\end{proof}

It is worth noticing that for generic Hamiltonians of the form \eqref{PTham} the leading order corrections are of linear order, $r=1$.
For special cases, as, e.g., the Harmonium model defined by Hamiltonian \eqref{ham} with $s=2$, however, one can find $r>1$.

To determine the first order contributions $c_{\bd{i}}^{(1)}$ to the quantum state \eqref{PTselfcon} we study the respective time-independent Schr\"odinger equation
\begin{equation}\label{SEq}
E(\kappa) \ket{\Psi(\kappa)} = (\hat{H}^{(0)} +\kappa \hat{V}) \ket{\Psi(\kappa)}\,.
\end{equation}
In zeroth order, by expanding the energy according to
\begin{equation}\label{ptE}
E(\kappa)\equiv E^{(0)}+\kappa\, E^{(1)}+\mathcal{O}(\kappa^2)\,,
\end{equation}
we have
\begin{equation}\label{PT0K}
E^{(0)} \ket{\Psi^{(0)}} = \hat{H}^{(0)} \ket{\Psi^{(0)}}\,.
\end{equation}
This apparently yields (recall also \eqref{ptspan})
\begin{equation}\label{PT0Kresult}
\ket{\Psi^{(0)}}=\ket{\bd{i}_0}= \ket{\bd{i}_0(0^+)}\,,\quad E^{(0)}=E_{\bd{i}_0}\,,
\end{equation}
where for any configuration $\bd{i}\equiv(i_1,\ldots,i_N)$, $E_{\bd{i}}$ denotes the unperturbed energy of the respective Slater determinant $\ket{\bd{i}}$. $E_{\bd{i}}$  is nothing else than just the sum of the single-particle energies of the orbitals $\ket{i_1},\ldots \ket{i_N}$.
Moreover, we have indeed $\ket{\bd{i}_0}= \ket{\bd{i}_0(0^+)}$ (up to a phase which we can set to zero) which follows from \eqref{ptspan}.

In linear order, \eqref{SEq} leads to
\begin{equation}\label{PT1K}
E^{(1)}\ket{\Psi^{(0)}} + E^{(0)}\ket{\Psi^{(1)}} = \hat{H}^{(0)} \ket{\Psi^{(1)}} + \hat{V} \ket{\Psi^{(0)}}\,.
\end{equation}
By projecting \eqref{PT1K} onto $\ket{\Psi^{(0)}}=\ket{\bd{i}_0(0^+)}=\ket{\bd{i}_0}$ and using \eqref{PT0K} we find
\begin{equation}\label{PT0Kresult1}
E^{(1)} = \bra{\bd{i}_0}\hat{V}\ket{\bd{i}_0}\,.
\end{equation}
On the orthogonal complement of $\ket{\bd{i}_0(0^+)}$ we can invert the operator $E^{(0)}-\hat{H}^{(0)}$. Thus we obtain from \eqref{PT1K} restricted to $\mbox{span}\big(\big\{\ket{\bd{i}(0^+)}\big\}_{\bd{i}\neq \bd{i}_0}\big)$
\begin{equation}\label{pt1KPsi1}
\ket{\Psi^{(1)}} = \big(E^{(0)}-\hat{H}^{(0)}\big)^{-1}\hat{V} \ket{\Psi^{(0)}}\,.
\end{equation}
Moreover, by comparing \eqref{PTc} and \eqref{PTNO} with \eqref{ptPsi} and using \eqref{pt0c} and Lemma \ref{lem1} we find
\begin{eqnarray}\label{pt1KPsi2}
\ket{\Psi^{(1)}} & =& \sum_{\bd{i}} \left[c_{\bd{i}}^{(1)}\, \ket{\bd{i}(0^+)} + c_{\bd{i}}^{(0)}\, \ket{\bd{i}(\kappa)}^{(1)}\right] \nonumber \\
&& \sum_{\bd{i}\in \mathcal{I}_{\geq 2}} c_{\bd{i}}^{(1)}\, \ket{\bd{i}(0^+)} + \ket{\bd{i}_0(\kappa)}^{(1)}\,.
\end{eqnarray}
Here, $\ket{\bd{i}(\kappa)}^{(1)}$ denotes the linear order of $\ket{\bd{i}(\kappa)}$ in $\kappa$ and $\mathcal{I}_{\geq 2}$ denotes the set of all configurations $\bd{i}=(i_1,\ldots,i_N)$ differing in at least two indices from $\bd{i}_0\equiv (1,\ldots,N)$.

Since $\langle \bd{i}(0^+)  \ket{\bd{i}_0(\kappa)}^{(1)}=0$ for all $\bd{i}$ differing from $\bd{i}_0$ by more than one orbital index, \eqref{pt1KPsi2} yields
\begin{equation}
\langle \bd{i}(0^+) \ket{\Psi^{(1)}} = c_{\bd{i}}^{(1)}\,, \quad \forall \bd{i} \in \mathcal{I}_{\geq 2}\,.
\end{equation}
Using this in combination with \eqref{pt1KPsi1} leads to
\begin{equation}\label{PT1Kc}
c_{\bd{i}}^{(1)} = \bra{\bd{i}(0^+)}\big(E^{(0)}-\hat{H}^{(0)}\big)^{-1}\hat{V} \ket{\bd{i}_0}
\end{equation}
for all $\bd{i} \in \mathcal{I}_{\geq 2}$. Finally, we observe that according to \eqref{PT1Kc}, $c_{\bd{i}}^{(1)}$ also vanishes in case $\bd{i}$ differs from $\bd{i}_0$ by more than two indices since $\hat{V}$ is a two-particle operator. Hence, \eqref{ptD} follows from \eqref{Dselfcon} by using \eqref{PT1Kc}.



\bibliography{GPCgs}

\begin{thebibliography}{58}%
\makeatletter
\providecommand \@ifxundefined [1]{%
 \@ifx{#1\undefined}
}%
\providecommand \@ifnum [1]{%
 \ifnum #1\expandafter \@firstoftwo
 \else \expandafter \@secondoftwo
 \fi
}%
\providecommand \@ifx [1]{%
 \ifx #1\expandafter \@firstoftwo
 \else \expandafter \@secondoftwo
 \fi
}%
\providecommand \natexlab [1]{#1}%
\providecommand \enquote  [1]{``#1''}%
\providecommand \bibnamefont  [1]{#1}%
\providecommand \bibfnamefont [1]{#1}%
\providecommand \citenamefont [1]{#1}%
\providecommand \href@noop [0]{\@secondoftwo}%
\providecommand \href [0]{\begingroup \@sanitize@url \@href}%
\providecommand \@href[1]{\@@startlink{#1}\@@href}%
\providecommand \@@href[1]{\endgroup#1\@@endlink}%
\providecommand \@sanitize@url [0]{\catcode `\\12\catcode `\$12\catcode
  `\&12\catcode `\#12\catcode `\^12\catcode `\_12\catcode `\%12\relax}%
\providecommand \@@startlink[1]{}%
\providecommand \@@endlink[0]{}%
\providecommand \url  [0]{\begingroup\@sanitize@url \@url }%
\providecommand \@url [1]{\endgroup\@href {#1}{\urlprefix }}%
\providecommand \urlprefix  [0]{URL }%
\providecommand \Eprint [0]{\href }%
\providecommand \doibase [0]{http://dx.doi.org/}%
\providecommand \selectlanguage [0]{\@gobble}%
\providecommand \bibinfo  [0]{\@secondoftwo}%
\providecommand \bibfield  [0]{\@secondoftwo}%
\providecommand \translation [1]{[#1]}%
\providecommand \BibitemOpen [0]{}%
\providecommand \bibitemStop [0]{}%
\providecommand \bibitemNoStop [0]{.\EOS\space}%
\providecommand \EOS [0]{\spacefactor3000\relax}%
\providecommand \BibitemShut  [1]{\csname bibitem#1\endcsname}%
\let\auto@bib@innerbib\@empty
\bibitem [{\citenamefont {Pauli}(1925)}]{Pauli1925}%
  \BibitemOpen
  \bibfield  {author} {\bibinfo {author} {\bibfnamefont {W.}~\bibnamefont
  {Pauli}},\ }\bibfield  {title} {\enquote {\bibinfo {title} {{\"U}ber den
  {Z}usammenhang des {A}bschlusses der {E}lektronengruppen im {A}tom mit der
  {K}omplexstruktur der {S}pektren},}\ }\href
  {http://dx.doi.org/10.1007/BF02980631} {\bibfield  {journal} {\bibinfo
  {journal} {Z. Phys.}\ }\textbf {\bibinfo {volume} {31}},\ \bibinfo {pages}
  {765--783} (\bibinfo {year} {1925})}\BibitemShut {NoStop}%
\bibitem [{\citenamefont {Borland}\ and\ \citenamefont
  {Dennis}(1972)}]{Borl1972}%
  \BibitemOpen
  \bibfield  {author} {\bibinfo {author} {\bibfnamefont {R.E.}\ \bibnamefont
  {Borland}}\ and\ \bibinfo {author} {\bibfnamefont {K.}~\bibnamefont
  {Dennis}},\ }\bibfield  {title} {\enquote {\bibinfo {title} {The conditions
  on the one-matrix for three-body fermion wavefunctions with one-rank equal to
  six},}\ }\href {http://stacks.iop.org/0022-3700/5/i=1/a=009} {\bibfield
  {journal} {\bibinfo  {journal} {J. Phys. B}\ }\textbf {\bibinfo {volume}
  {5}},\ \bibinfo {pages} {7} (\bibinfo {year} {1972})}\BibitemShut {NoStop}%
\bibitem [{\citenamefont {Ruskai}(2007)}]{Rus2}%
  \BibitemOpen
  \bibfield  {author} {\bibinfo {author} {\bibfnamefont {M.~B.}\ \bibnamefont
  {Ruskai}},\ }\bibfield  {title} {\enquote {\bibinfo {title} {{Connecting
  N-representability to {W}eyl's problem: the one-particle density matrix for N
  = 3 and R = 6}},}\ }\href {http://stacks.iop.org/1751-8121/40/i=45/a=F01}
  {\bibfield  {journal} {\bibinfo  {journal} {J. Phys. A}\ }\textbf {\bibinfo
  {volume} {40}},\ \bibinfo {pages} {F961} (\bibinfo {year}
  {2007})}\BibitemShut {NoStop}%
\bibitem [{\citenamefont {Klyachko}(2006)}]{Kly2}%
  \BibitemOpen
  \bibfield  {author} {\bibinfo {author} {\bibfnamefont {A.}~\bibnamefont
  {Klyachko}},\ }\bibfield  {title} {\enquote {\bibinfo {title} {{Quantum
  marginal problem and N-representability}},}\ }\href
  {http://stacks.iop.org/1742-6596/36/i=1/a=014} {\bibfield  {journal}
  {\bibinfo  {journal} {J. Phys. Conf. Ser.}\ }\textbf {\bibinfo {volume}
  {36}},\ \bibinfo {pages} {72} (\bibinfo {year} {2006})}\BibitemShut {NoStop}%
\bibitem [{\citenamefont {Altunbulak}\ and\ \citenamefont
  {Klyachko}(2008)}]{Kly3}%
  \BibitemOpen
  \bibfield  {author} {\bibinfo {author} {\bibfnamefont {M.}~\bibnamefont
  {Altunbulak}}\ and\ \bibinfo {author} {\bibfnamefont {A.}~\bibnamefont
  {Klyachko}},\ }\bibfield  {title} {\enquote {\bibinfo {title} {The {P}auli
  principle revisited},}\ }\href {\doibase 10.1007/s00220-008-0552-z}
  {\bibfield  {journal} {\bibinfo  {journal} {Commun. Math. Phys.}\ }\textbf
  {\bibinfo {volume} {282}},\ \bibinfo {pages} {287--322} (\bibinfo {year}
  {2008})}\BibitemShut {NoStop}%
\bibitem [{\citenamefont {Altunbulak}(2008)}]{Altun}%
  \BibitemOpen
  \bibfield  {author} {\bibinfo {author} {\bibfnamefont {M.}~\bibnamefont
  {Altunbulak}},\ }\emph {\bibinfo {title} {The {P}auli principle,
  representation theory, and geometry of flag varieties}},\ \href
  {http://www.thesis.bilkent.edu.tr/0003572} {Ph.D. thesis},\ \bibinfo
  {school} {Bilkent University} (\bibinfo {year} {2008})\BibitemShut {NoStop}%
\bibitem [{\citenamefont {Tennie}\ \emph
  {et~al.}(2016{\natexlab{a}})\citenamefont {Tennie}, \citenamefont {Ebler},
  \citenamefont {Vedral},\ and\ \citenamefont {Schilling}}]{CS2016a}%
  \BibitemOpen
  \bibfield  {author} {\bibinfo {author} {\bibfnamefont {F.}~\bibnamefont
  {Tennie}}, \bibinfo {author} {\bibfnamefont {D.}~\bibnamefont {Ebler}},
  \bibinfo {author} {\bibfnamefont {V.}~\bibnamefont {Vedral}}, \ and\ \bibinfo
  {author} {\bibfnamefont {C.}~\bibnamefont {Schilling}},\ }\bibfield  {title}
  {\enquote {\bibinfo {title} {Pinning of fermionic occupation numbers:
  {G}eneral concepts and one spatial dimension},}\ }\href {\doibase
  10.1103/PhysRevA.93.042126} {\bibfield  {journal} {\bibinfo  {journal} {Phys.
  Rev. A}\ }\textbf {\bibinfo {volume} {93}},\ \bibinfo {pages} {042126}
  (\bibinfo {year} {2016}{\natexlab{a}})}\BibitemShut {NoStop}%
\bibitem [{\citenamefont {Klyachko}(2009)}]{Kly1}%
  \BibitemOpen
  \bibfield  {author} {\bibinfo {author} {\bibfnamefont {A.}~\bibnamefont
  {Klyachko}},\ }\bibfield  {title} {\enquote {\bibinfo {title} {The {P}auli
  exclusion principle and beyond},}\ }\href {http://arxiv.org/abs/0904.2009}
  {\bibfield  {journal} {\bibinfo  {journal} {arXiv:0904.2009}\ } (\bibinfo
  {year} {2009})}\BibitemShut {NoStop}%
\bibitem [{\citenamefont {Schilling}(2015{\natexlab{a}})}]{CSQuasipinning}%
  \BibitemOpen
  \bibfield  {author} {\bibinfo {author} {\bibfnamefont {C.}~\bibnamefont
  {Schilling}},\ }\bibfield  {title} {\enquote {\bibinfo {title} {{Quasipinning
  and its relevance for $N$-fermion quantum states}},}\ }\href {\doibase
  10.1103/PhysRevA.91.022105} {\bibfield  {journal} {\bibinfo  {journal} {Phys.
  Rev. A}\ }\textbf {\bibinfo {volume} {91}},\ \bibinfo {pages} {022105}
  (\bibinfo {year} {2015}{\natexlab{a}})}\BibitemShut {NoStop}%
\bibitem [{\citenamefont {Schilling}\ \emph
  {et~al.}(2017{\natexlab{a}})\citenamefont {Schilling}, \citenamefont
  {Benavides-Riveros},\ and\ \citenamefont {Vrana}}]{Stability}%
  \BibitemOpen
  \bibfield  {author} {\bibinfo {author} {\bibfnamefont {C.}~\bibnamefont
  {Schilling}}, \bibinfo {author} {\bibfnamefont {C.~L.}\ \bibnamefont
  {Benavides-Riveros}}, \ and\ \bibinfo {author} {\bibfnamefont
  {P.}~\bibnamefont {Vrana}},\ }\bibfield  {title} {\enquote {\bibinfo {title}
  {Reconstructing quantum states from single-party information},}\ }\href
  {\doibase 10.1103/PhysRevA.96.052312} {\bibfield  {journal} {\bibinfo
  {journal} {Phys. Rev. A}\ }\textbf {\bibinfo {volume} {96}},\ \bibinfo
  {pages} {052312} (\bibinfo {year} {2017}{\natexlab{a}})}\BibitemShut
  {NoStop}%
\bibitem [{\citenamefont {Schilling}\ \emph {et~al.}(2013)\citenamefont
  {Schilling}, \citenamefont {Gross},\ and\ \citenamefont
  {Christandl}}]{CS2013}%
  \BibitemOpen
  \bibfield  {author} {\bibinfo {author} {\bibfnamefont {C.}~\bibnamefont
  {Schilling}}, \bibinfo {author} {\bibfnamefont {D.}~\bibnamefont {Gross}}, \
  and\ \bibinfo {author} {\bibfnamefont {M.}~\bibnamefont {Christandl}},\
  }\bibfield  {title} {\enquote {\bibinfo {title} {Pinning of fermionic
  occupation numbers},}\ }\href {\doibase 10.1103/PhysRevLett.110.040404}
  {\bibfield  {journal} {\bibinfo  {journal} {Phys. Rev. Lett.}\ }\textbf
  {\bibinfo {volume} {110}},\ \bibinfo {pages} {040404} (\bibinfo {year}
  {2013})}\BibitemShut {NoStop}%
\bibitem [{\citenamefont {Ebler}(2013)}]{Ebler}%
  \BibitemOpen
  \bibfield  {author} {\bibinfo {author} {\bibfnamefont {D.}~\bibnamefont
  {Ebler}},\ }\href@noop {} {\emph {\bibinfo {title} {Pinning analysis for
  $4$-harmonium}}}\ (\bibinfo  {publisher} {Semester thesis, ETH Zurich},\
  \bibinfo {year} {2013})\BibitemShut {NoStop}%
\bibitem [{\citenamefont {Schilling}(2014)}]{CSthesis}%
  \BibitemOpen
  \bibfield  {author} {\bibinfo {author} {\bibfnamefont {C.}~\bibnamefont
  {Schilling}},\ }\emph {\bibinfo {title} {Quantum marginal problem and its
  physical relevance}},\ \href {\doibase 10.3929/ethz-a-010139282} {Ph.D.
  thesis},\ \bibinfo  {school} {ETH-Z\"urich} (\bibinfo {year}
  {2014})\BibitemShut {NoStop}%
\bibitem [{\citenamefont {Tennie}\ \emph
  {et~al.}(2016{\natexlab{b}})\citenamefont {Tennie}, \citenamefont {Vedral},\
  and\ \citenamefont {Schilling}}]{CS2016b}%
  \BibitemOpen
  \bibfield  {author} {\bibinfo {author} {\bibfnamefont {F.}~\bibnamefont
  {Tennie}}, \bibinfo {author} {\bibfnamefont {V.}~\bibnamefont {Vedral}}, \
  and\ \bibinfo {author} {\bibfnamefont {C.}~\bibnamefont {Schilling}},\
  }\bibfield  {title} {\enquote {\bibinfo {title} {Pinning of fermionic
  occupation numbers: {H}igher spatial dimensions and spin},}\ }\href {\doibase
  10.1103/PhysRevA.94.012120} {\bibfield  {journal} {\bibinfo  {journal} {Phys.
  Rev. A}\ }\textbf {\bibinfo {volume} {94}},\ \bibinfo {pages} {012120}
  (\bibinfo {year} {2016}{\natexlab{b}})}\BibitemShut {NoStop}%
\bibitem [{\citenamefont {Tennie}\ \emph {et~al.}(2017)\citenamefont {Tennie},
  \citenamefont {Vedral},\ and\ \citenamefont {Schilling}}]{CSQ}%
  \BibitemOpen
  \bibfield  {author} {\bibinfo {author} {\bibfnamefont {F.}~\bibnamefont
  {Tennie}}, \bibinfo {author} {\bibfnamefont {V.}~\bibnamefont {Vedral}}, \
  and\ \bibinfo {author} {\bibfnamefont {C.}~\bibnamefont {Schilling}},\
  }\bibfield  {title} {\enquote {\bibinfo {title} {Influence of the fermionic
  exchange symmetry beyond {P}auli's exclusion principle},}\ }\href {\doibase
  10.1103/PhysRevA.95.022336} {\bibfield  {journal} {\bibinfo  {journal} {Phys.
  Rev. A}\ }\textbf {\bibinfo {volume} {95}},\ \bibinfo {pages} {022336}
  (\bibinfo {year} {2017})}\BibitemShut {NoStop}%
\bibitem [{\citenamefont {Tennie}(2017)}]{FTthesis}%
  \BibitemOpen
  \bibfield  {author} {\bibinfo {author} {\bibfnamefont {F.}~\bibnamefont
  {Tennie}},\ }\emph {\bibinfo {title} {Influence of the exchange symmetry
  beyond the exclusion principle}},\ \href
  {https://ora.ox.ac.uk/objects/uuid:43f5cb68-144d-4e3e-84c0-c8fd58f5ca02}
  {Ph.D. thesis},\ \bibinfo  {school} {University of Oxford} (\bibinfo {year}
  {2017})\BibitemShut {NoStop}%
\bibitem [{\citenamefont {Benavides-Riveros}\ \emph {et~al.}(2013)\citenamefont
  {Benavides-Riveros}, \citenamefont {Gracia-Bond\'ia},\ and\ \citenamefont
  {Springborg}}]{BenavLiQuasi}%
  \BibitemOpen
  \bibfield  {author} {\bibinfo {author} {\bibfnamefont {C.~L.}\ \bibnamefont
  {Benavides-Riveros}}, \bibinfo {author} {\bibfnamefont {J.~M.}\ \bibnamefont
  {Gracia-Bond\'ia}}, \ and\ \bibinfo {author} {\bibfnamefont {M.}~\bibnamefont
  {Springborg}},\ }\bibfield  {title} {\enquote {\bibinfo {title}
  {Qua\-sipinning and entanglement in the lithium isoelectronic series},}\
  }\href {\doibase 10.1103/PhysRevA.88.022508} {\bibfield  {journal} {\bibinfo
  {journal} {Phys. Rev. A}\ }\textbf {\bibinfo {volume} {88}},\ \bibinfo
  {pages} {022508} (\bibinfo {year} {2013})}\BibitemShut {NoStop}%
\bibitem [{\citenamefont {Klyachko}(2013)}]{Kly5}%
  \BibitemOpen
  \bibfield  {author} {\bibinfo {author} {\bibfnamefont {A.}~\bibnamefont
  {Klyachko}},\ }\bibfield  {title} {\enquote {\bibinfo {title} {The {P}auli
  principle and magnetism},}\ }\href {http://arxiv.org/abs/1311.5999}
  {\bibfield  {journal} {\bibinfo  {journal} {arXiv:1311.5999}\ } (\bibinfo
  {year} {2013})}\BibitemShut {NoStop}%
\bibitem [{\citenamefont {Chakraborty}\ and\ \citenamefont
  {Mazziotti}(2014)}]{Mazz14}%
  \BibitemOpen
  \bibfield  {author} {\bibinfo {author} {\bibfnamefont {R.}~\bibnamefont
  {Chakraborty}}\ and\ \bibinfo {author} {\bibfnamefont {D.A.}\ \bibnamefont
  {Mazziotti}},\ }\bibfield  {title} {\enquote {\bibinfo {title} {Generalized
  {P}auli conditions on the spectra of one-electron reduced density matrices of
  atoms and molecules},}\ }\href {\doibase 10.1103/PhysRevA.89.042505}
  {\bibfield  {journal} {\bibinfo  {journal} {Phys. Rev. A}\ }\textbf {\bibinfo
  {volume} {89}},\ \bibinfo {pages} {042505} (\bibinfo {year}
  {2014})}\BibitemShut {NoStop}%
\bibitem [{\citenamefont {Chakraborty}\ and\ \citenamefont
  {Mazziotti}(2015{\natexlab{a}})}]{MazzOpen}%
  \BibitemOpen
  \bibfield  {author} {\bibinfo {author} {\bibfnamefont {R.}~\bibnamefont
  {Chakraborty}}\ and\ \bibinfo {author} {\bibfnamefont {D.A.}\ \bibnamefont
  {Mazziotti}},\ }\bibfield  {title} {\enquote {\bibinfo {title} {Sufficient
  condition for the openness of a many-electron quantum system from the
  violation of a generalized {P}auli exclusion principle},}\ }\href {\doibase
  10.1103/PhysRevA.91.010101} {\bibfield  {journal} {\bibinfo  {journal} {Phys.
  Rev. A}\ }\textbf {\bibinfo {volume} {91}},\ \bibinfo {pages} {010101}
  (\bibinfo {year} {2015}{\natexlab{a}})}\BibitemShut {NoStop}%
\bibitem [{\citenamefont {Benavides-Riveros}\ and\ \citenamefont
  {Springborg}(2015)}]{BenavQuasi2}%
  \BibitemOpen
  \bibfield  {author} {\bibinfo {author} {\bibfnamefont {C.~L.}\ \bibnamefont
  {Benavides-Riveros}}\ and\ \bibinfo {author} {\bibfnamefont {M.}~\bibnamefont
  {Springborg}},\ }\bibfield  {title} {\enquote {\bibinfo {title} {Quasipinning
  and selection rules for excitations in atoms and molecules},}\ }\href
  {\doibase 10.1103/PhysRevA.92.012512} {\bibfield  {journal} {\bibinfo
  {journal} {Phys. Rev. A}\ }\textbf {\bibinfo {volume} {92}},\ \bibinfo
  {pages} {012512} (\bibinfo {year} {2015})}\BibitemShut {NoStop}%
\bibitem [{\citenamefont {Theophilou}\ \emph {et~al.}(2015)\citenamefont
  {Theophilou}, \citenamefont {Lathiotakis}, \citenamefont {Marques},\ and\
  \citenamefont {Helbig}}]{RDMFT}%
  \BibitemOpen
  \bibfield  {author} {\bibinfo {author} {\bibfnamefont {I.}~\bibnamefont
  {Theophilou}}, \bibinfo {author} {\bibfnamefont {N.N.}\ \bibnamefont
  {Lathiotakis}}, \bibinfo {author} {\bibfnamefont {M.}~\bibnamefont
  {Marques}}, \ and\ \bibinfo {author} {\bibfnamefont {N.}~\bibnamefont
  {Helbig}},\ }\bibfield  {title} {\enquote {\bibinfo {title} {Generalized
  {P}auli constraints in reduced density matrix functional theory},}\ }\href
  {http://scitation.aip.org/content/aip/journal/jcp/142/15/10.1063/1.4918346}
  {\bibfield  {journal} {\bibinfo  {journal} {J. Chem. Phys.}\ }\textbf
  {\bibinfo {volume} {142}},\ \bibinfo {eid} {154108} (\bibinfo {year}
  {2015})}\BibitemShut {NoStop}%
\bibitem [{\citenamefont {Chakraborty}\ and\ \citenamefont
  {Mazziotti}(2015{\natexlab{b}})}]{Mazzagain}%
  \BibitemOpen
  \bibfield  {author} {\bibinfo {author} {\bibfnamefont {R.}~\bibnamefont
  {Chakraborty}}\ and\ \bibinfo {author} {\bibfnamefont {D.~A.}\ \bibnamefont
  {Mazziotti}},\ }\bibfield  {title} {\enquote {\bibinfo {title} {Structure of
  the one-electron reduced density matrix from the generalized {P}auli
  exclusion principle},}\ }\href {\doibase 10.1002/qua.24934} {\bibfield
  {journal} {\bibinfo  {journal} {Int. J. Quant. Chem.}\ }\textbf {\bibinfo
  {volume} {115}},\ \bibinfo {pages} {1305--1310} (\bibinfo {year}
  {2015}{\natexlab{b}})}\BibitemShut {NoStop}%
\bibitem [{\citenamefont {Lopes}(2015)}]{Alex}%
  \BibitemOpen
  \bibfield  {author} {\bibinfo {author} {\bibfnamefont {A.}~\bibnamefont
  {Lopes}},\ }\emph {\bibinfo {title} {Pure univariate quantum marginals and
  electronic transport properties of geometrically frustrated systems}},\ \href
  {https://www.freidok.uni-freiburg.de/fedora/objects/freidok:10057/datastreams/FILE1/content}
  {Ph.D. thesis},\ \bibinfo  {school} {University of Freiburg} (\bibinfo {year}
  {2015})\BibitemShut {NoStop}%
\bibitem [{\citenamefont {Schilling}(2015{\natexlab{b}})}]{CS2015Hubbard}%
  \BibitemOpen
  \bibfield  {author} {\bibinfo {author} {\bibfnamefont {C.}~\bibnamefont
  {Schilling}},\ }\bibfield  {title} {\enquote {\bibinfo {title} {Hubbard
  model: Pinning of occupation numbers and role of symmetries},}\ }\href
  {\doibase 10.1103/PhysRevB.92.155149} {\bibfield  {journal} {\bibinfo
  {journal} {Phys. Rev. B}\ }\textbf {\bibinfo {volume} {92}},\ \bibinfo
  {pages} {155149} (\bibinfo {year} {2015}{\natexlab{b}})}\BibitemShut
  {NoStop}%
\bibitem [{\citenamefont {Wang}\ and\ \citenamefont {Knowles}(2015)}]{RDMFT2}%
  \BibitemOpen
  \bibfield  {author} {\bibinfo {author} {\bibfnamefont {J.}~\bibnamefont
  {Wang}}\ and\ \bibinfo {author} {\bibfnamefont {P.J.}\ \bibnamefont
  {Knowles}},\ }\bibfield  {title} {\enquote {\bibinfo {title} {Nonuniqueness
  of algebraic first-order density-matrix functionals},}\ }\href {\doibase
  10.1103/PhysRevA.92.012520} {\bibfield  {journal} {\bibinfo  {journal} {Phys.
  Rev. A}\ }\textbf {\bibinfo {volume} {92}},\ \bibinfo {pages} {012520}
  (\bibinfo {year} {2015})}\BibitemShut {NoStop}%
\bibitem [{\citenamefont {Benavides-Riveros}(2015)}]{CBthesis}%
  \BibitemOpen
  \bibfield  {author} {\bibinfo {author} {\bibfnamefont {C.~L.}\ \bibnamefont
  {Benavides-Riveros}},\ }\emph {\bibinfo {title} {Disentangling the marginal
  problem in quantum chemistry}},\ \href
  {https://zaguan.unizar.es/record/47393?ln=es#} {Ph.D. thesis},\ \bibinfo
  {school} {Universidad de Zaragoza} (\bibinfo {year} {2015})\BibitemShut
  {NoStop}%
\bibitem [{\citenamefont {Chakraborty}\ and\ \citenamefont
  {Mazziotti}(2016)}]{NewMazziotti}%
  \BibitemOpen
  \bibfield  {author} {\bibinfo {author} {\bibfnamefont {R.}~\bibnamefont
  {Chakraborty}}\ and\ \bibinfo {author} {\bibfnamefont {D.~A.}\ \bibnamefont
  {Mazziotti}},\ }\bibfield  {title} {\enquote {\bibinfo {title} {Role of the
  generalized {P}auli constraints in the quantum chemistry of excited
  states},}\ }\href {http://dx.doi.org/10.1002/qua.25120} {\bibfield  {journal}
  {\bibinfo  {journal} {Int. J. Quant. Chem.}\ }\textbf {\bibinfo {volume}
  {116}} (\bibinfo {year} {2016})}\BibitemShut {NoStop}%
\bibitem [{\citenamefont {Mazziotti}(2016)}]{Mazz16}%
  \BibitemOpen
  \bibfield  {author} {\bibinfo {author} {\bibfnamefont {D.A.}\ \bibnamefont
  {Mazziotti}},\ }\bibfield  {title} {\enquote {\bibinfo {title}
  {Pure-$n$-representability conditions of two-fermion reduced density
  matrices},}\ }\href {\doibase 10.1103/PhysRevA.94.032516} {\bibfield
  {journal} {\bibinfo  {journal} {Phys. Rev. A}\ }\textbf {\bibinfo {volume}
  {94}},\ \bibinfo {pages} {032516} (\bibinfo {year} {2016})}\BibitemShut
  {NoStop}%
\bibitem [{\citenamefont {Benavides-Riveros}\ \emph {et~al.}(2017)\citenamefont
  {Benavides-Riveros}, \citenamefont {Lathiotakis},\ and\ \citenamefont
  {Marques}}]{CBcorr}%
  \BibitemOpen
  \bibfield  {author} {\bibinfo {author} {\bibfnamefont {C.~L.}\ \bibnamefont
  {Benavides-Riveros}}, \bibinfo {author} {\bibfnamefont {N.~N.}\ \bibnamefont
  {Lathiotakis}}, \ and\ \bibinfo {author} {\bibfnamefont {M.~A.~L.}\
  \bibnamefont {Marques}},\ }\bibfield  {title} {\enquote {\bibinfo {title}
  {Towards a formal definition of static and dynamic electronic
  correlations},}\ }\href {http://dx.doi.org/10.1039/C7CP01137G} {\bibfield
  {journal} {\bibinfo  {journal} {Phys. Chem. Chem. Phys.}\ }\textbf {\bibinfo
  {volume} {19}},\ \bibinfo {pages} {12655} (\bibinfo {year}
  {2017})}\BibitemShut {NoStop}%
\bibitem [{\citenamefont {Chakraborty}\ and\ \citenamefont
  {Mazziotti}(2017)}]{MazzOpen2}%
  \BibitemOpen
  \bibfield  {author} {\bibinfo {author} {\bibfnamefont {R.}~\bibnamefont
  {Chakraborty}}\ and\ \bibinfo {author} {\bibfnamefont {D.A.}\ \bibnamefont
  {Mazziotti}},\ }\bibfield  {title} {\enquote {\bibinfo {title}
  {Noise-assisted energy transfer from the dilation of the set of one-electron
  reduced density matrices},}\ }\href {\doibase 10.1063/1.4982927} {\bibfield
  {journal} {\bibinfo  {journal} {J. Chem. Phys.}\ }\textbf {\bibinfo {volume}
  {146}},\ \bibinfo {pages} {184101} (\bibinfo {year} {2017})}\BibitemShut
  {NoStop}%
\bibitem [{\citenamefont {Schilling}\ \emph
  {et~al.}(2017{\natexlab{b}})\citenamefont {Schilling}, \citenamefont
  {Altunbulak}, \citenamefont {Knecht}, \citenamefont {Lopes}, \citenamefont
  {Whitfield}, \citenamefont {Christandl}, \citenamefont {Gross},\ and\
  \citenamefont {Reiher}}]{CSQChem}%
  \BibitemOpen
  \bibfield  {author} {\bibinfo {author} {\bibfnamefont {C.}~\bibnamefont
  {Schilling}}, \bibinfo {author} {\bibfnamefont {M.}~\bibnamefont
  {Altunbulak}}, \bibinfo {author} {\bibfnamefont {S.}~\bibnamefont {Knecht}},
  \bibinfo {author} {\bibfnamefont {A.}~\bibnamefont {Lopes}}, \bibinfo
  {author} {\bibfnamefont {J.~D.}\ \bibnamefont {Whitfield}}, \bibinfo {author}
  {\bibfnamefont {M.}~\bibnamefont {Christandl}}, \bibinfo {author}
  {\bibfnamefont {D.}~\bibnamefont {Gross}}, \ and\ \bibinfo {author}
  {\bibfnamefont {M.}~\bibnamefont {Reiher}},\ }\bibfield  {title} {\enquote
  {\bibinfo {title} {Generalized {P}auli constraints in small atoms},}\ }\href
  {https://arxiv.org/abs/1710.03074} {\bibfield  {journal} {\bibinfo  {journal}
  {arXiv:1710.03074}\ } (\bibinfo {year} {2017}{\natexlab{b}})}\BibitemShut
  {NoStop}%
\bibitem [{\citenamefont {Maciazek}\ and\ \citenamefont
  {Tsanov}(2017)}]{TomaszDoub}%
  \BibitemOpen
  \bibfield  {author} {\bibinfo {author} {\bibfnamefont {T.}~\bibnamefont
  {Maciazek}}\ and\ \bibinfo {author} {\bibfnamefont {V.}~\bibnamefont
  {Tsanov}},\ }\bibfield  {title} {\enquote {\bibinfo {title} {Quantum
  marginals from pure doubly excited states},}\ }\href
  {http://stacks.iop.org/1751-8121/50/i=46/a=465304} {\bibfield  {journal}
  {\bibinfo  {journal} {J. Phys. A}\ }\textbf {\bibinfo {volume} {50}},\
  \bibinfo {pages} {465304} (\bibinfo {year} {2017})}\BibitemShut {NoStop}%
\bibitem [{\citenamefont {Bloch}\ \emph {et~al.}(2008)\citenamefont {Bloch},
  \citenamefont {Dalibard},\ and\ \citenamefont {Zwerger}}]{BlochReview}%
  \BibitemOpen
  \bibfield  {author} {\bibinfo {author} {\bibfnamefont {I.}~\bibnamefont
  {Bloch}}, \bibinfo {author} {\bibfnamefont {J.}~\bibnamefont {Dalibard}}, \
  and\ \bibinfo {author} {\bibfnamefont {W.}~\bibnamefont {Zwerger}},\
  }\bibfield  {title} {\enquote {\bibinfo {title} {Many-body physics with
  ultracold gases},}\ }\href {\doibase 10.1103/RevModPhys.80.885} {\bibfield
  {journal} {\bibinfo  {journal} {Rev. Mod. Phys.}\ }\textbf {\bibinfo {volume}
  {80}},\ \bibinfo {pages} {885--964} (\bibinfo {year} {2008})}\BibitemShut
  {NoStop}%
\bibitem [{\citenamefont {Chin}\ \emph {et~al.}(2010)\citenamefont {Chin},
  \citenamefont {Grimm}, \citenamefont {Julienne},\ and\ \citenamefont
  {Tiesinga}}]{Feshbach}%
  \BibitemOpen
  \bibfield  {author} {\bibinfo {author} {\bibfnamefont {C.}~\bibnamefont
  {Chin}}, \bibinfo {author} {\bibfnamefont {R.}~\bibnamefont {Grimm}},
  \bibinfo {author} {\bibfnamefont {P.}~\bibnamefont {Julienne}}, \ and\
  \bibinfo {author} {\bibfnamefont {E.}~\bibnamefont {Tiesinga}},\ }\bibfield
  {title} {\enquote {\bibinfo {title} {Feshbach resonances in ultracold
  gases},}\ }\href {\doibase 10.1103/RevModPhys.82.1225} {\bibfield  {journal}
  {\bibinfo  {journal} {Rev. Mod. Phys.}\ }\textbf {\bibinfo {volume} {82}},\
  \bibinfo {pages} {1225--1286} (\bibinfo {year} {2010})}\BibitemShut {NoStop}%
\bibitem [{\citenamefont {Weidem{\"u}ller}\ and\ \citenamefont
  {Zimmermann}(2011)}]{ultracoldBook1}%
  \BibitemOpen
  \bibfield  {author} {\bibinfo {author} {\bibfnamefont {M.}~\bibnamefont
  {Weidem{\"u}ller}}\ and\ \bibinfo {author} {\bibfnamefont {C.}~\bibnamefont
  {Zimmermann}},\ }\href@noop {} {\emph {\bibinfo {title} {Interactions in
  ultracold gases: from atoms to molecules}}}\ (\bibinfo  {publisher} {John
  Wiley \& Sons},\ \bibinfo {year} {2011})\BibitemShut {NoStop}%
\bibitem [{\citenamefont {White}(1992)}]{White-1992b}%
  \BibitemOpen
  \bibfield  {author} {\bibinfo {author} {\bibfnamefont {S.R.}\ \bibnamefont
  {White}},\ }\bibfield  {title} {\enquote {\bibinfo {title} {Density matrix
  formulation for quantum renormalization groups},}\ }\href {\doibase
  10.1103/PhysRevLett.69.2863} {\bibfield  {journal} {\bibinfo  {journal}
  {Phys. Rev. Lett.}\ }\textbf {\bibinfo {volume} {69}},\ \bibinfo {pages}
  {2863--2866} (\bibinfo {year} {1992})}\BibitemShut {NoStop}%
\bibitem [{\citenamefont {L\"owdin}(1955)}]{Lowdin}%
  \BibitemOpen
  \bibfield  {author} {\bibinfo {author} {\bibfnamefont {P.O.}\ \bibnamefont
  {L\"owdin}},\ }\bibfield  {title} {\enquote {\bibinfo {title} {Quantum theory
  of many-particle systems. {I}. physical interpretations by means of density
  matrices, natural spin-orbitals, and convergence problems in the method of
  configurational interaction},}\ }\href {\doibase 10.1103/PhysRev.97.1474}
  {\bibfield  {journal} {\bibinfo  {journal} {Phys. Rev.}\ }\textbf {\bibinfo
  {volume} {97}},\ \bibinfo {pages} {1474} (\bibinfo {year}
  {1955})}\BibitemShut {NoStop}%
\bibitem [{\citenamefont {Davidson}(1976)}]{Davidson76}%
  \BibitemOpen
  \bibfield  {author} {\bibinfo {author} {\bibfnamefont {E.R.}\ \bibnamefont
  {Davidson}},\ }\href {http://nla.gov.au/nla.cat-vn2136954} {\emph {\bibinfo
  {title} {Reduced Density Matrices in Quantum Chemistry}}}\ (\bibinfo
  {publisher} {Academic Press, New York},\ \bibinfo {year} {1976})\BibitemShut
  {NoStop}%
\bibitem [{\citenamefont {White}\ and\ \citenamefont
  {Martin}(1999)}]{White-1999}%
  \BibitemOpen
  \bibfield  {author} {\bibinfo {author} {\bibfnamefont {S.}~\bibnamefont
  {White}}\ and\ \bibinfo {author} {\bibfnamefont {R.}~\bibnamefont {Martin}},\
  }\bibfield  {title} {\enquote {\bibinfo {title} {Ab initio quantum chemistry
  using the density matrix renormalization group},}\ }\href {\doibase
  http://dx.doi.org/10.1063/1.478295} {\bibfield  {journal} {\bibinfo
  {journal} {J. Chem. Phys.}\ }\textbf {\bibinfo {volume} {110}},\ \bibinfo
  {pages} {4127--4130} (\bibinfo {year} {1999})}\BibitemShut {NoStop}%
\bibitem [{\citenamefont {Legeza}\ and\ \citenamefont
  {Schilling}()}]{CSOLDMRG}%
  \BibitemOpen
  \bibfield  {author} {\bibinfo {author} {\bibfnamefont {\"O.}\ \bibnamefont
  {Legeza}}\ and\ \bibinfo {author} {\bibfnamefont {C.}~\bibnamefont
  {Schilling}},\ }\href@noop {} {\enquote {\bibinfo {title} {{QC-DMRG} for
  harmonically confined fermions with arbitrary pair interactions},}\ }\bibinfo
  {note} {In preparation}\BibitemShut {NoStop}%
\bibitem [{\citenamefont {Legeza}\ \emph {et~al.}(2003)\citenamefont {Legeza},
  \citenamefont {R\"oder},\ and\ \citenamefont {Hess}}]{Legeza-2003a}%
  \BibitemOpen
  \bibfield  {author} {\bibinfo {author} {\bibfnamefont {\"O.}\ \bibnamefont
  {Legeza}}, \bibinfo {author} {\bibfnamefont {J.}~\bibnamefont {R\"oder}}, \
  and\ \bibinfo {author} {\bibfnamefont {B.~A.}\ \bibnamefont {Hess}},\
  }\bibfield  {title} {\enquote {\bibinfo {title} {Controlling the accuracy of
  the density-matrix renormalization-group method: The dynamical block state
  selection approach},}\ }\href {\doibase 10.1103/PhysRevB.67.125114}
  {\bibfield  {journal} {\bibinfo  {journal} {Phys. Rev. B}\ }\textbf {\bibinfo
  {volume} {67}},\ \bibinfo {pages} {125114} (\bibinfo {year}
  {2003})}\BibitemShut {NoStop}%
\bibitem [{\citenamefont {Legeza}\ and\ \citenamefont
  {S\'olyom}(2004)}]{Legeza-2004b}%
  \BibitemOpen
  \bibfield  {author} {\bibinfo {author} {\bibfnamefont {\"O.}\ \bibnamefont
  {Legeza}}\ and\ \bibinfo {author} {\bibfnamefont {J.}~\bibnamefont
  {S\'olyom}},\ }\bibfield  {title} {\enquote {\bibinfo {title} {Quantum data
  compression, quantum information generation, and the density-matrix
  renormalization-group method},}\ }\href {\doibase 10.1103/PhysRevB.70.205118}
  {\bibfield  {journal} {\bibinfo  {journal} {Phys. Rev. B}\ }\textbf {\bibinfo
  {volume} {70}},\ \bibinfo {pages} {205118} (\bibinfo {year}
  {2004})}\BibitemShut {NoStop}%
\bibitem [{\citenamefont {Legeza}\ and\ \citenamefont
  {S\'olyom}(2003)}]{Legeza-2003b}%
  \BibitemOpen
  \bibfield  {author} {\bibinfo {author} {\bibfnamefont {\"O.}\ \bibnamefont
  {Legeza}}\ and\ \bibinfo {author} {\bibfnamefont {J.}~\bibnamefont
  {S\'olyom}},\ }\bibfield  {title} {\enquote {\bibinfo {title} {Optimizing the
  density-matrix renormalization group method using quantum information
  entropy},}\ }\href {\doibase 10.1103/PhysRevB.68.195116} {\bibfield
  {journal} {\bibinfo  {journal} {Phys. Rev. B}\ }\textbf {\bibinfo {volume}
  {68}},\ \bibinfo {pages} {195116} (\bibinfo {year} {2003})}\BibitemShut
  {NoStop}%
\bibitem [{\citenamefont {Wang}\ \emph {et~al.}(2012)\citenamefont {Wang},
  \citenamefont {Wang}, \citenamefont {Yang},\ and\ \citenamefont
  {Li}}]{harmOsc2012}%
  \BibitemOpen
  \bibfield  {author} {\bibinfo {author} {\bibfnamefont {Z.-L.}\ \bibnamefont
  {Wang}}, \bibinfo {author} {\bibfnamefont {A.M.}\ \bibnamefont {Wang}},
  \bibinfo {author} {\bibfnamefont {Y.}~\bibnamefont {Yang}}, \ and\ \bibinfo
  {author} {\bibfnamefont {X.C.}\ \bibnamefont {Li}},\ }\bibfield  {title}
  {\enquote {\bibinfo {title} {Exact eigenfunctions of n-body system with
  quadratic pair potential},}\ }\href
  {http://stacks.iop.org/0253-6102/58/i=5/a=04} {\bibfield  {journal} {\bibinfo
   {journal} {Comm. Theor. Phys.}\ }\textbf {\bibinfo {volume} {58}},\ \bibinfo
  {pages} {639} (\bibinfo {year} {2012})}\BibitemShut {NoStop}%
\bibitem [{\citenamefont {Legeza}\ and\ \citenamefont
  {S\'olyom}(2006)}]{Legeza-2006a}%
  \BibitemOpen
  \bibfield  {author} {\bibinfo {author} {\bibfnamefont {\"O.}\ \bibnamefont
  {Legeza}}\ and\ \bibinfo {author} {\bibfnamefont {J.}~\bibnamefont
  {S\'olyom}},\ }\bibfield  {title} {\enquote {\bibinfo {title} {Two-site
  entropy and quantum phase transitions in low-dimensional models},}\ }\href
  {\doibase 10.1103/PhysRevLett.96.116401} {\bibfield  {journal} {\bibinfo
  {journal} {Phys. Rev. Lett.}\ }\textbf {\bibinfo {volume} {96}},\ \bibinfo
  {pages} {116401} (\bibinfo {year} {2006})}\BibitemShut {NoStop}%
\bibitem [{\citenamefont {Rissler}\ \emph {et~al.}(2006)\citenamefont
  {Rissler}, \citenamefont {Noack},\ and\ \citenamefont
  {White}}]{Rissler-2006}%
  \BibitemOpen
  \bibfield  {author} {\bibinfo {author} {\bibfnamefont {J.}~\bibnamefont
  {Rissler}}, \bibinfo {author} {\bibfnamefont {R.}~\bibnamefont {Noack}}, \
  and\ \bibinfo {author} {\bibfnamefont {S.R.}\ \bibnamefont {White}},\
  }\bibfield  {title} {\enquote {\bibinfo {title} {Measuring orbital
  interaction using quantum information theory},}\ }\href {\doibase
  http://dx.doi.org/10.1016/j.chemphys.2005.10.018} {\bibfield  {journal}
  {\bibinfo  {journal} {Chem. Phys.}\ }\textbf {\bibinfo {volume} {323}},\
  \bibinfo {pages} {519 -- 531} (\bibinfo {year} {2006})}\BibitemShut {NoStop}%
\bibitem [{\citenamefont {Barcza}\ \emph {et~al.}(2011)\citenamefont {Barcza},
  \citenamefont {Legeza}, \citenamefont {Marti},\ and\ \citenamefont
  {Reiher}}]{Barcza-2011}%
  \BibitemOpen
  \bibfield  {author} {\bibinfo {author} {\bibfnamefont {G.}~\bibnamefont
  {Barcza}}, \bibinfo {author} {\bibfnamefont {\"O.}\ \bibnamefont {Legeza}},
  \bibinfo {author} {\bibfnamefont {K.~H.}\ \bibnamefont {Marti}}, \ and\
  \bibinfo {author} {\bibfnamefont {M.}~\bibnamefont {Reiher}},\ }\bibfield
  {title} {\enquote {\bibinfo {title} {Quantum-information analysis of
  electronic states of different molecular structures},}\ }\href {\doibase
  10.1103/PhysRevA.83.012508} {\bibfield  {journal} {\bibinfo  {journal} {Phys.
  Rev. A}\ }\textbf {\bibinfo {volume} {83}},\ \bibinfo {pages} {012508}
  (\bibinfo {year} {2011})}\BibitemShut {NoStop}%
\bibitem [{\citenamefont {Szalay}\ \emph {et~al.}(2015)\citenamefont {Szalay},
  \citenamefont {Pfeffer}, \citenamefont {Murg}, \citenamefont {Barcza},
  \citenamefont {Verstraete}, \citenamefont {Schneider},\ and\ \citenamefont
  {Legeza}}]{Szalay-2015}%
  \BibitemOpen
  \bibfield  {author} {\bibinfo {author} {\bibfnamefont {S.}~\bibnamefont
  {Szalay}}, \bibinfo {author} {\bibfnamefont {M.}~\bibnamefont {Pfeffer}},
  \bibinfo {author} {\bibfnamefont {V.}~\bibnamefont {Murg}}, \bibinfo {author}
  {\bibfnamefont {G.}~\bibnamefont {Barcza}}, \bibinfo {author} {\bibfnamefont
  {F.}~\bibnamefont {Verstraete}}, \bibinfo {author} {\bibfnamefont
  {R.}~\bibnamefont {Schneider}}, \ and\ \bibinfo {author} {\bibfnamefont
  {\"O.}\ \bibnamefont {Legeza}},\ }\bibfield  {title} {\enquote {\bibinfo
  {title} {Tensor product methods and entanglement optimization for ab initio
  quantum chemistry},}\ }\href
  {http://onlinelibrary.wiley.com/doi/10.1002/qua.24898/full} {\bibfield
  {journal} {\bibinfo  {journal} {Int. J. Quant. Chem.}\ }\textbf {\bibinfo
  {volume} {115}},\ \bibinfo {pages} {1342} (\bibinfo {year}
  {2015})}\BibitemShut {NoStop}%
\bibitem [{\citenamefont {Sawicki}\ and\ \citenamefont {M.}(2011)}]{SK11}%
  \BibitemOpen
  \bibfield  {author} {\bibinfo {author} {\bibfnamefont {A.}~\bibnamefont
  {Sawicki}}\ and\ \bibinfo {author} {\bibfnamefont {Kus}\ \bibnamefont {M.}},\
  }\bibfield  {title} {\enquote {\bibinfo {title} {Geometry of the local
  equivalence of states},}\ }\href
  {http://iopscience.iop.org/article/10.1088/1751-8113/44/49/495301} {\bibfield
   {journal} {\bibinfo  {journal} {J. Phys. A}\ }\textbf {\bibinfo {volume}
  {44}} (\bibinfo {year} {2011})}\BibitemShut {NoStop}%
\bibitem [{\citenamefont {Sawicki}\ \emph {et~al.}(2012)\citenamefont
  {Sawicki}, \citenamefont {Oszmaniec},\ and\ \citenamefont {Kus}}]{SOK2012}%
  \BibitemOpen
  \bibfield  {author} {\bibinfo {author} {\bibfnamefont {A.}~\bibnamefont
  {Sawicki}}, \bibinfo {author} {\bibfnamefont {M.}~\bibnamefont {Oszmaniec}},
  \ and\ \bibinfo {author} {\bibfnamefont {M.}~\bibnamefont {Kus}},\ }\bibfield
   {title} {\enquote {\bibinfo {title} {Critical sets of the total variance of
  state detect all slocc entanglement classes},}\ }\href
  {https://journals.aps.org/pra/abstract/10.1103/PhysRevA.86.040304} {\bibfield
   {journal} {\bibinfo  {journal} {Phys. Rev. A}\ }\textbf {\bibinfo {volume}
  {86}},\ \bibinfo {pages} {040304(R)} (\bibinfo {year} {2012})}\BibitemShut
  {NoStop}%
\bibitem [{\citenamefont {Sawicki}\ and\ \citenamefont {Tsanov}(2013)}]{ST13}%
  \BibitemOpen
  \bibfield  {author} {\bibinfo {author} {\bibfnamefont {A.}~\bibnamefont
  {Sawicki}}\ and\ \bibinfo {author} {\bibfnamefont {V.}~\bibnamefont
  {Tsanov}},\ }\bibfield  {title} {\enquote {\bibinfo {title} {A link between
  quantum entanglement, secant varieties and sphericity},}\ }\href
  {http://iopscience.iop.org/article/10.1088/1751-8113/46/26/265301/meta}
  {\bibfield  {journal} {\bibinfo  {journal} {J. Phys. A}\ }\textbf {\bibinfo
  {volume} {46}} (\bibinfo {year} {2013})}\BibitemShut {NoStop}%
\bibitem [{\citenamefont {Maciazek}\ \emph {et~al.}(2013)\citenamefont
  {Maciazek}, \citenamefont {Oszmaniec},\ and\ \citenamefont
  {Sawicki}}]{MOS13}%
  \BibitemOpen
  \bibfield  {author} {\bibinfo {author} {\bibfnamefont {T.}~\bibnamefont
  {Maciazek}}, \bibinfo {author} {\bibfnamefont {M.}~\bibnamefont {Oszmaniec}},
  \ and\ \bibinfo {author} {\bibfnamefont {A.}~\bibnamefont {Sawicki}},\
  }\bibfield  {title} {\enquote {\bibinfo {title} {How many invariant
  polynomials are needed to decide local unitary equivalence of qubit
  states?}}\ }\href
  {http://aip.scitation.org/doi/abs/10.1063/1.4819499?journalCode=jmp}
  {\bibfield  {journal} {\bibinfo  {journal} {J. Math. Phys.}\ }\textbf
  {\bibinfo {volume} {54}},\ \bibinfo {pages} {092201} (\bibinfo {year}
  {2013})}\BibitemShut {NoStop}%
\bibitem [{\citenamefont {Huckleberry}\ \emph {et~al.}(2013)\citenamefont
  {Huckleberry}, \citenamefont {M.},\ and\ \citenamefont {A.}}]{HKS13}%
  \BibitemOpen
  \bibfield  {author} {\bibinfo {author} {\bibfnamefont {A.}~\bibnamefont
  {Huckleberry}}, \bibinfo {author} {\bibfnamefont {Kus}\ \bibnamefont {M.}}, \
  and\ \bibinfo {author} {\bibfnamefont {Sawicki}\ \bibnamefont {A.}},\
  }\bibfield  {title} {\enquote {\bibinfo {title} {Bipartite entanglement,
  spherical actions, and geometry of local unitary orbits},}\ }\href
  {http://aip.scitation.org/doi/abs/10.1063/1.4791681} {\bibfield  {journal}
  {\bibinfo  {journal} {J. Math. Phys.}\ }\textbf {\bibinfo {volume} {54}},\
  \bibinfo {pages} {022202} (\bibinfo {year} {2013})}\BibitemShut {NoStop}%
\bibitem [{\citenamefont {Sawicki}\ \emph {et~al.}(2014)\citenamefont
  {Sawicki}, \citenamefont {Oszmaniec},\ and\ \citenamefont {Kus}}]{SOK14}%
  \BibitemOpen
  \bibfield  {author} {\bibinfo {author} {\bibfnamefont {A.}~\bibnamefont
  {Sawicki}}, \bibinfo {author} {\bibfnamefont {M.}~\bibnamefont {Oszmaniec}},
  \ and\ \bibinfo {author} {\bibfnamefont {M.}~\bibnamefont {Kus}},\ }\bibfield
   {title} {\enquote {\bibinfo {title} {Convexity of momentum map, morse index,
  and quantum entanglement},}\ }\href
  {http://www.worldscientific.com/doi/abs/10.1142/S0129055X14500044} {\bibfield
   {journal} {\bibinfo  {journal} {Rev. Math. Phys.}\ }\textbf {\bibinfo
  {volume} {26}},\ \bibinfo {pages} {1450004} (\bibinfo {year}
  {2014})}\BibitemShut {NoStop}%
\bibitem [{\citenamefont {Maciazek}\ and\ \citenamefont
  {Sawicki}(2015)}]{MS15}%
  \BibitemOpen
  \bibfield  {author} {\bibinfo {author} {\bibfnamefont {T.}~\bibnamefont
  {Maciazek}}\ and\ \bibinfo {author} {\bibfnamefont {A.}~\bibnamefont
  {Sawicki}},\ }\bibfield  {title} {\enquote {\bibinfo {title} {Critical points
  of the linear entropy for pure l-qubit states},}\ }\href
  {http://iopscience.iop.org/article/10.1088/1751-8113/48/4/045305} {\bibfield
  {journal} {\bibinfo  {journal} {J. Phys. A}\ }\textbf {\bibinfo {volume}
  {48}},\ \bibinfo {pages} {045305} (\bibinfo {year} {2015})}\BibitemShut
  {NoStop}%
\bibitem [{\citenamefont {Sawicki}\ \emph {et~al.}(2017)\citenamefont
  {Sawicki}, \citenamefont {Maciazek}, \citenamefont {Oszmaniec}, \citenamefont
  {Karnas}, \citenamefont {Kowalczyk-Murynka},\ and\ \citenamefont
  {Kus}}]{symplecticreview}%
  \BibitemOpen
  \bibfield  {author} {\bibinfo {author} {\bibfnamefont {A.}~\bibnamefont
  {Sawicki}}, \bibinfo {author} {\bibfnamefont {T.}~\bibnamefont {Maciazek}},
  \bibinfo {author} {\bibfnamefont {M.}~\bibnamefont {Oszmaniec}}, \bibinfo
  {author} {\bibfnamefont {K.}~\bibnamefont {Karnas}}, \bibinfo {author}
  {\bibfnamefont {K.}~\bibnamefont {Kowalczyk-Murynka}}, \ and\ \bibinfo
  {author} {\bibfnamefont {M.}~\bibnamefont {Kus}},\ }\bibfield  {title}
  {\enquote {\bibinfo {title} {Multipartite quantum correlations: symplectic
  and algebraic geometry approach},}\ }\href {https://arxiv.org/abs/1701.03536}
  {\bibfield  {journal} {\bibinfo  {journal} {arXiv:1701.03536}\ } (\bibinfo
  {year} {2017})}\BibitemShut {NoStop}%
\bibitem [{\citenamefont {Maciazek}\ and\ \citenamefont
  {Sawicki}(2017)}]{MS17}%
  \BibitemOpen
  \bibfield  {author} {\bibinfo {author} {\bibfnamefont {T.}~\bibnamefont
  {Maciazek}}\ and\ \bibinfo {author} {\bibfnamefont {A.}~\bibnamefont
  {Sawicki}},\ }\bibfield  {title} {\enquote {\bibinfo {title} {Asymptotic
  properties of entanglement polytopes for large number of qubits},}\ }\href
  {https://arxiv.org/pdf/1706.05019.pdf} {\bibfield  {journal} {\bibinfo
  {journal} {arXiv:1706.05019}\ } (\bibinfo {year} {2017})}\BibitemShut
  {NoStop}%
\end{thebibliography}%

\end{document}